\documentclass[12pt]{article}

\usepackage{amsfonts,amssymb,amsmath,amsthm,epsfig,euscript,bbm,amsthm}
\usepackage{algorithm}
\usepackage{algpseudocode}
\usepackage{amsthm}

\usepackage{tikz}
\usetikzlibrary{arrows}
\usepackage {graphicx}
\usepackage{enumerate}
\usepackage[margin=1in]{geometry}
\usepackage[font=small,labelfont=bf]{caption}
\usepackage{subcaption}
\usepackage{caption}
\usepackage[authoryear]{natbib}
\usepackage[titletoc,title]{appendix}
\usepackage[pdftex,colorlinks]{hyperref}
\usepackage{mathtools}
 \usepackage{xcolor}

\usepackage{graphicx}
\usepackage{subcaption}
 
 \definecolor{orange}{RGB}{230,170,120}
  \definecolor{green}{RGB}{120,200,120}

 \hypersetup{
 	colorlinks=true,
 	linkcolor=blue,
 	filecolor=blue,      
 	urlcolor=cyan,
 }
 \hypersetup{colorlinks,linkcolor={blue},citecolor={blue},urlcolor={red}} 
 \urlstyle{same}
 \usepackage{geometry}                
 \usepackage{multirow}
\usepackage[section]{placeins}
 \geometry{letterpaper}                   
 \usepackage{bbm}
 \usepackage{graphicx}
 \usepackage{amssymb}
 \usepackage{epstopdf}
 \usepackage{tikz} 
 \usepackage{amsmath} 
\usepackage[utf8]{inputenc}
\usepackage{algorithm}

\usepackage{amsmath}
\usetikzlibrary{decorations.pathreplacing,matrix}

\tikzset{ 
table/.style={
  matrix of math nodes,
  row sep=-\pgflinewidth,
  column sep=-\pgflinewidth,
  nodes={rectangle,text width=3em,align=center},
  text depth=1.25ex,
  text height=2.5ex,
  nodes in empty cells,
  left delimiter=[,
  right delimiter={]},
  ampersand replacement=\&
}
}

\usepackage{algorithm,algpseudocode}

\newcounter{algsubstate}
\renewcommand{\thealgsubstate}{\alph{algsubstate}}
\theoremstyle{plain}
 \newtheorem{thm}{Theorem}[section]
 
 \newtheorem{prop}[thm]{Proposition}
 \newtheorem{cor}{Corollary}
 
 \theoremstyle{definition}

 \newtheorem{exmp}{Example}[section]
  \newtheorem{ass}{Assumption}
  
 \theoremstyle{definition}
 \newtheorem{rem}{Remark}
 
 \makeatletter
 \def\BState{\State\hskip-\ALG@thistlm}
 \makeatother
 
 \usepackage{authblk}

 \DeclareGraphicsRule{.tif}{png}{.png}{`convert #1 `dirname #1`/`basename #1 .tif`.png}
 
\def\spacingset#1{\renewcommand{\baselinestretch}%
{#1}\small\normalsize} \spacingset{1}
 
\definecolor{coquelicot}{rgb}{1.0, 0.22, 0.0}
\usepackage[color=orange!20]{todonotes} \presetkeys{todonotes}{inline}{}

 \newcommand{\blind}{0}
 
 \usepackage{setspace}

  \begin{document}
  
\if0\blind
{  
  \title{\bf Identification and Inference for Synthetic Controls with Confounding\footnote{We thank Phillip Heiler, David Hirshberg, Jann Spiess, Jesse Shapiro, Stefan Wager, Kaspar W\"uthrich, Yinchu Zhu for helpful comments.}  }
\author{Guido W. Imbens\footnote{Graduate School of Business and Department of Economics, Stanford University. Email: imbens@stanford.edu} $\quad$ Davide Viviano\footnote{Department of Economics, Harvard University. Email: dviviano@fas.harvard.edu}}
    \date{\today}
  \maketitle
} \fi

\if1\blind
{
  \bigskip
  \bigskip
  \bigskip
  \ 
 \\
 
 \ 
 \\

  \medskip
} \fi

\bigskip  	 

\begin{abstract}
\noindent  

This paper studies inference on treatment effects in panel data settings with unobserved confounding.
 We model outcome variables through a factor model with \textit{random} factors and loadings. Such factors and loadings may act as unobserved confounders: when the treatment is implemented depends on time-varying factors, and who receives the treatment depends on unit-level confounders. 
We study the identification of treatment effects and illustrate the presence of a trade-off between time and unit-level confounding. We provide asymptotic results for inference for several Synthetic Control estimators and show that different sources of randomness should be considered for inference, depending on the nature of confounding. We conclude with a comparison of Synthetic Control estimators with alternatives for factor models.

\end{abstract}

\noindent%
{\it Keywords:} Synthetic Control, Panel Data, Difference-in-Differences, Causal Inference. \\
{\it JEL Codes:} C33 
\vfill

\newpage 

 \onehalfspacing

\section{Introduction}

This paper studies identification and inference in panel or longitudinal data settings where a (possibly small) number of units is exposed to treatment from one period $T_0$ onwards. We will refer broadly to these scenarios as Synthetic Control settings. The existing literature has proposed a variety of methods for estimating counterfactual outcomes: controlling for the lagged outcomes (known as horizontal regression), controlling for the control units outcomes \citep[vertical/Synthetic Control regression,][]{abadie2003economic, abadie2010synthetic}, or using information both across time and units as for the Synthetic DiD \citep{arkhangelsky2021synthetic} and matrix completion methods and factor models \citep{athey2021matrix, gobillon2016regional, xu2017generalized}. Most of this literature has focused on settings where the treatment assignment process is not random, and unobservable characteristics of treated and control units can be matched almost surely.\footnote{See for example, the discussion in \cite{ferman2021properties}. } This approach allows the researcher to interpret the task of counterfactual estimation as a prediction problem, while inference must only consider variation in idiosyncratic shocks. This paper studies the properties of synthetic control-type estimators in the presence of unobservable (random) confounders. These confounders cannot be matched exactly. We derive identification conditions that formalize (i) which regression strategy is best suited in the presence of different confounding mechanisms and (ii) which sources of randomness should be considered for inference.

We study settings where potential outcomes follow an interactive factor model, with possibly high dimensional factors and loadings and additive (nonrandom) treatment effects. Unlike previous literature on synthetic control methods, here, \textit{both} the factors and loadings are random variables and can act as confounders. The loadings act as unobserved confounders for \textit{whom} receives the treatment, and the factors act as confounders for \textit{when} the treatment is implemented. For instance, consider the problem of studying the effect of state-level regulation \citep{abadie2010synthetic}. When the regulation is implemented may depend on aggregate factors of the economy, and which state implements the regulation may depend on state-specific (unobserved) characteristics. 
Because of confounding, conditional on the assignment mechanism, the distribution of the factors and loadings can be arbitrary.

We provide identification restrictions corresponding to the vertical, horizontal, and Synthetic DiD regressions within this confounding model. Vertical regression (i.e., controlling for a weighted combination of control units’ outcomes) allows for arbitrary time-level confounders, and it imposes restrictions on unit confounders. It assumes that conditional on who receives the treatment, unit-level confounders over the treated and control units match \textit{in expectations} after appropriately reweighting (for weights summing to one). The leading case is the following: we can match all unit confounders' conditional expectations that are endogenous. The remaining loadings are exogenous with respect to the assignment mechanism. We refer to the latter identification restriction as ``no high dimensional unit confounders" since these scenarios are attained only when a few loadings are endogenous. Vice-versa, horizontal regression (i.e., controlling for a weighted combination of lagged outcomes) allows for arbitrary endogenous unit-level confounders and assumes ``no high dimensional time confounders", inverting the role of the factors and loadings. This finding illustrates the trade-off between the confounding restriction and the regression strategy.

 We then turn to identification strategies robust to either high dimensional units or time confounders. We show that the Synthetic Difference-in-Differences in \cite{arkhangelsky2021synthetic} is robust to either specification: its bias depends on the \textit{product} of two differences; the difference of the time confounders' weighted expectations between treatment and control periods, and the difference of the unit level confounders. Therefore, its bias is zero if there are \textit{either} low dimensional unit confounders or time confounders. This characterization of the bias formalizes double-robustness in Synthetic Control settings with confounding, and, while building on the intuition in \cite{arkhangelsky2021synthetic}, is novel to the literature.

Taking these results as stock, we draw their implications for inference. Inference with time confounders for Synthetic Controls must consider the randomness generated across units by the (exogenous) high dimensional loadings but can condition on time variation induced by the (endogenous) factors. The reason is that the estimator is unbiased only unconditional on the loadings.  For the horizontal regression, instead, we should account for the randomness generated by the factors but not the loadings. Finally, inference with Synthetic DiD must account for randomness over both unit and time dimensions because robust to confounding occurring \textit{either} across units or time, motivating the construction of standard errors that capture both sources of randomness.

We derive asymptotic properties of the estimators and standard errors corresponding to confounding over time, units, or either of the two. We assume that the post-treatment period and the number of treated units grow with the sample but are small relative to the number of control periods and units. (We return to the case of one or few treated units in Appendix \ref{sec:placebo}.) We characterize the convergence rates of each estimator as a function of the treated units $N_1$ and treatment periods $T_1$. The convergence rate of Synthetic Control is between $1/\sqrt{N_1}$, and $1/\sqrt{N_1 T_1}$, with $1/\sqrt{N_1}$ if the factors are arbitrarily endogenous, and $1/\sqrt{N_1 T_1}$ if the factors are exogenous. Surprisingly, the rate of convergence of Synthetic DiD can be \textit{faster} than the one of Synthetic Control with confounding. The key insight is that the Synthetic control's convergence rate is of order $1/\sqrt{N_1 T_1}$ only if both unit and time confounders concentrate around zero. However, for Synthetic DiD, a convergence rate of order $1/\sqrt{N_1 T_1}$ only requires that the time confounders before and after the treatment concentrate around the \textit{same} expectation after reweighting.

We conclude with a discussion on Synthetic Control and factor models. 
Under distributional restrictions on the factors and loadings, for a Synthetic Control regression, convergence rates depend on the rank of a matrix of (low) dimensional unit confounders, as opposed to the rank of the (possibly high dimensional) matrix of factors and loadings. 
However, sufficiently fast convergence rates of the estimated weights also require some restrictions on the distribution of factors before the treatment time occurs since, otherwise we would not be able to estimate consistently such weights. 



Our paper relates to an extensive literature on panel data models and Synthetic Controls. We build on the literature on synthetic controls \citep{abadie2003economic, abadie2010synthetic, doudchenko2016balancing, abadie2022synthetic}, matrix completion methods and synthetic difference-in-differences \citep{abadie2010synthetic, athey2021matrix, arkhangelsky2021synthetic, liu2022practical, arkhangelsky2023large}. 
We contribute to this literature by studying confounding through the unit and time-level confounders. Different from existing analysis with factor models \citep[e.g.][]{athey2021matrix, arkhangelsky2021synthetic}, here, both factors and loadings are random instead of deterministic and low rank, motivating different identification properties and inference. \cite{ben2021augmented} provide useful finite sample upper bound with fixed loadings and factors using a balancing procedure. Here, we clarify conditions for different estimation strategies and variance estimators in the presence of random confounders.

We relate to studies on confounding with panel data, including
\cite{ferman2016revisiting}, \cite{hahn2017synthetic}, \cite{kellogg2021combining}, and \cite{agarwal2022synthetic} for recent works on identification with dynamic treatments. \cite{shi2021theory}, \cite{imbens2021controlling} study instead proximal methods with panel data. Here, we allow confounding over either or both time and units, whereas these references implicitly condition on either (or both) the factors and loadings. Our focus on inference and its connection to confounding is a further distinction from these references.  \cite{shen2022tale} show that horizontal and vertical regression provide the same point estimates in the absence of an intercept and constraint on the weights. Here, we show that these estimators would be biased without an intercept and weights summing to one. Also, the authors consider random assignments while here we motivate the choice of regression strategies and confidence intervals based on the confounding mechanisms. 

We complement the literature on inference with synthetic control and two-way fixed effects, including \cite{bottmer2021design},\cite{chernozhukov2019practical}, \cite{cattaneo2021prediction}, \cite{chernozhukov2021exact}, \cite{viviano2023synthetic}, \cite{imai2021use}. Different from the references above, here we explicitly model the confounding mechanism in the construction of confidence intervals. \cite{shaikh2021randomization} allow for random treatment timing but does not consider unobserved confounders. Finally, we more broadly relate to a larger strand of literature on factor models \cite{bai2009panel}, \cite{moon2017dynamic}, \cite{bai2019rank} among others, which, however, does not directly tackle the problem of confounding for inference on causal effects. We defer an extensive discussion of theoretical comparisons of synthetic control weights and estimators for factor models to Section \ref{sec:factors}.

\section{Setup}

We consider a panel with units and periods 
$$
i \in \{1, \cdots, N_0, \cdots, N_0 + N_1\}, \quad t \in \{1, \cdots, T_0, \cdots, T_0 + T_1\}, 
$$  
respectively. Define $W_{i,t} \in \{0,1\}$ the treatment assignment for unit $i$ at time $t$, with 
$$
W_{i,t} = 1\{i \ge N_0\} 1\{t \ge T_0\}.
$$
 We let both $T_0$ and $N_0$ be random variables. Denote $\Big(Y_{i,t}(0), Y_{i,t}(1)\Big)$ the potential outcomes under treatment and control of unit $i$ at time $t$.  Throughout our discussion, we assume constant and homogeneous treatment effects of the form
 \begin{equation} \label{eqn:1} 
 Y_{i,t}(1) = \tau + Y_{i,t}(0), 
 \end{equation}  
 where $\tau$ denotes the average treatment effect. We return to heterogeneous treatment effects in Remark \ref{rem:heterogeneous}. Researchers observe a matrix of potential outcomes under control depicted below, and potential outcomes $\mathbf{Y}_{i,t}(1)$ only for units $i \ge N_0, t \ge T_0$.

$$ 
\small 
\begin{aligned} 
\mathbf{Y}(0) = \quad \quad 
\begin{tikzpicture}[baseline,decoration=brace]
\matrix (m) [table] {
    Y_{1,1}  \&  Y_{1,2} \& \cdots \& Y_{1, T_0 - 1} \& \& Y_{1, T_0}  \& \cdots \& Y_{1, T_0 +T} \\
    \vdots \& \vdots \& \ddots \& \cdots \& \& \vdots \& \vdots \& \vdots \\
    \vdots \& \vdots \& \ddots \& \cdots \& \& Y_{N_0 - 1, T_0} \& \vdots \& Y_{N_0 - 1, T_0} \\
   Y_{N_0,1}  \&  Y_{N_0,2} \& \cdots \& Y_{N_0, T_0 - 1} \& \& \textbf{?}  \& \cdots \& \textbf{?} \\ 
    \vdots \& \vdots \& \ddots \& \cdots \& \& \vdots \& \vdots \& \vdots \\
     Y_{N_0 + N_1,1}  \&  Y_{N_0 + N_1,2} \& \cdots \& Y_{N_0 + N_1, T_0 - 1} \& \& \textbf{?}  \& \cdots \& \textbf{?} \\
};
    \draw[decorate,transform canvas={xshift=-1.4em},thick] (m-3-1.south west) -- node[left=2pt] {$N_0$} (m-1-1.north west);
    \draw[decorate,transform canvas={yshift=0.5em},thick] (m-1-1.north west) -- node[above=2pt] {$T_0 - 1$} (m-1-4.north east);
\end{tikzpicture}.
\end{aligned}
$$
Our goal is to estimate the average treatment effect $\tau$, which entails estimating the counterfactual outcomes. The literature has proposed three approaches for this problem: \textit{horizontal regression}, \textit{vertical regression} (i.e., Synthetic Control methods, \cite{abadie2010synthetic}), and \textit{matrix completion methods} \citep{athey2021matrix}. The horizontal regression uses \textit{past outcomes} to predict future outcomes of treated units. The vertical regression uses the control units' outcomes to predict the treated units' outcomes in periods $t > T_0$. The matrix completion method leverages the assumption that $\mathbf{Y}(0)$ can be approximated by a \textit{low-rank} factor model  \citep{bai2003inferential}. In this paper, we study how different sources of confounding justify different regression strategies and standard errors.

\subsection{Data generating process and confounding}

We consider the following factor model.

\begin{ass}[High-dimensional factor model] \label{ass:outcomea} Assume that for all $(i,t)$, 
$$
\small 
\begin{aligned} 
Y_{i,t}(0) & =  \Big(\lambda_t + \tilde{\Lambda}_t\Big)^\top \Big(\gamma_i + \tilde{\Gamma}_i\Big) + \iota_{0,i} + \iota_{1, t} + \varepsilon_{i,t}, \\ ||\mathbb{E}[\tilde{\Gamma}_i]||_{\infty} &  = ||\mathbb{E}[\tilde{\Lambda}_t ]||_{\infty} = \mathbb{E}[\varepsilon_{i,t}] = 0, 
\end{aligned}
$$  
for \textit{random variables} $\tilde{\Gamma}_i, \tilde{\Lambda}_t \in \mathbb{R}^{r}, \varepsilon_{i,t} \in \mathbb{R}$, and \textit{constants} $\lambda_t, \gamma_i \in \mathbb{R}^r, \iota_{0,i}, \iota_{0,t} \in \mathbb{R}$, and $r$ possibly unknown to researchers. Let $\varepsilon_{i,t} | N_0, T_0 \sim_{i.i.d.} \mathcal{P}$ with $\mathbb{E}[\varepsilon_{i,t}^3] < \infty, \mathbb{E}[\varepsilon_{i, t}^2] = \sigma_\varepsilon^2 > 0$. 
\end{ass}

Assumption \ref{ass:outcomea} postulates that outcomes follow a factor model. The factor model depends on factors $\tilde{\Lambda}_t, \lambda_t$, and loadings $\tilde{\Gamma}_i, \gamma_i$. 
Here, $\varepsilon_{i,t}$ is an exogenous shock. The main difference between the model in Assumption \ref{ass:outcomea} with methods discussed in previous literature \citep[e.g.][]{athey2021matrix, abadie2010synthetic, arkhangelsky2021synthetic, shen2022tale} is that that we treat the factors and loadings as random variables, possibly acting as confounders. 

Confounding over time and across units follows two independent processes.  
\begin{ass}[Independent processes] \label{ass:inda} $\Big[\Big(\tilde{\Gamma}_i\Big)_{i\ge 1}, N_0\Big] \perp \Big[\Big(\tilde{\Lambda}_t\Big)_{t\ge 1}, T_0\Big]$, and $\Big[(\tilde{\Gamma}_i)_{i \ge 1}, (\tilde{\Lambda}_t)_{t \ge 1}\Big] \perp (\varepsilon_{i,t})_{i \ge 1, t \ge 1}$. 
\end{ass}

Assumption \ref{ass:inda} states that \textit{who} receives the treatment depends on individual confounders and is independent of \textit{when} the treatment is implemented. Assumption \ref{ass:inda} simplifies our theoretical analysis and allows us to study independently two sources of confounding over time and across units.  

\begin{exmp}[Effects of minimum wage on wages] \label{exmp:main} For state-level regulations, such as studying the effect of minimum wage, Assumption \ref{ass:inda} states that which states may implement the policy depends on state-level characteristics, whereas \textit{when} the policy is implemented depends on aggregate (time-varying) factors of the economy.  \qed  
\end{exmp}

\begin{ass} \label{ass:iid} For all $i, t$, $||\tilde{\Gamma}_i||_2, ||\gamma_i||_2, ||\tilde{\Lambda}_t||_2 , ||\lambda_t||_2$ are uniformly bounded almost surely. Let $\tilde{\Gamma}_i \sim_{i.i.d.} \mathcal{P}_\Gamma, \tilde{\Lambda}_t \sim_{i.i.d} \mathcal{P}_{\Lambda}$, with $\mathrm{Var}(\tilde{\Lambda}_t) = \Sigma_{\tilde{\Lambda}}, \mathrm{Var}(\tilde{\Gamma}_i) = \Sigma_{\tilde{\Gamma}}$ being positive definite. 
\end{ass} 
Assumption \ref{ass:iid} states that (i) factors and loadings have bounded $l_2$-norm; (ii) factors and loadings are $i.i.d.$ \textit{unconditional} on the treatment assignment mechanism. Assumption \ref{ass:iid} does not impose restrictions on how factors and loadings relate to the assignment mechanism. Therefore, it does not rule out dependence between factors, loadings, and the assignment mechanism. The bounded $l_2$-norm restriction allows for a growing number of possibly \textit{weak} factors and allows us to control the variance of potential outcomes.\footnote{The assumption that factors are uniformly bounded can be relaxed by subgaussianity at the expense of additional notation.} Finally, note that time trends or seasonality components can be directly incorporated in a time fixed effect, assuming the separability of non-stationary components.






\subsection{Research questions}

Given Equation \eqref{eqn:1}, it follows that 
 \begin{equation} \label{eqn:1a} 
\small 
\begin{aligned} 
(i) & \quad \quad \mathbb{E}\Big[Y_{N,T} \Big| N_0 = N, T_0 = T, \tilde{\Gamma}_N, \tilde{\Lambda}_T\Big] - \mathbb{E}\Big[Y_{N,T}(0) \Big| N_0 = N, T_0 = T, \tilde{\Gamma}_N, \tilde{\Lambda}_T\Big] & = \tau.\\
(ii) & \quad \quad \mathbb{E}\Big[Y_{N,T} \Big| N_0 = N, T_0 = T, \tilde{\Gamma}_N\Big] - \mathbb{E}\Big[Y_{N,T}(0) | N_0 = N, T_0 = T, \tilde{\Gamma}_N\Big] & = \tau \\  (iii) & \quad \quad \mathbb{E}\Big[Y_{N,T}  \Big|  N_0 = N, T_0  = T, \tilde{\Lambda}_T\Big] - \mathbb{E}\Big[Y_{N,T}(0) | N_0 = N, T_0 = T, \tilde{\Lambda}_T\Big] & = \tau.
\end{aligned} 
\end{equation} 

Namely, researchers may estimate counterfactual outcomes conditional on different information sets to recover the same estimand $\tau$. This raises the following questions: 
\begin{itemize}
\item[Q1] \textit{Which conditions on factors and loadings guarantee the identification of $\tau$, and how do these conditions relate to existing regression strategies? }
Since the factors and loadings are random, identification requires conditions on how they relate to treatment assignments.  In Section \ref{sec:identification}, we argue that different regression strategies are motivated by different identification strategies.
\item[Q2] \textit{Which source of randomness should we consider for inference?} Equation \eqref{eqn:1a} shows that we can take differences of expectations conditional on \textit{different} information sets and recover the \textit{same} estimand $\tau$. This creates confusion for inference on $\tau$: estimators of different conditional expectations will present different variances, not only because such estimators are different but also because we can condition on different information sets. In Section \ref{sec:inference_main}, we relate confidence intervals (and sources of randomness) to the underlying identification assumptions.
\item[Q3] \textit{Synthetic Control or factor models?} Equation \eqref{eqn:1a} $(i)$ shows that it suffices to estimate the factors and loadings to recover $\tau$. However, standard estimators for factor models require a low-rank representation. In Section \ref{sec:estimation}, we provide a discussion and comparison with Synthetic Controls in settings with high dimensional (weak) factors. 
\end{itemize}


\section{Identification} \label{sec:identification}

This section studies identification conditional on $(N_0, T_0) = (N, T)$ for given $(N,T)$.

\subsection{Vertical and horizontal identification}

We first discuss identification for the horizontal regression.

\begin{ass}[Limited confoundedness over time] \label{ass:unc1a} Suppose that 
\begin{itemize} 
\item[(A)] For all $i, t$,  for fixed $\gamma_i, \lambda_t$,
$$
Y_{i,t}(0) \perp  \Big[N_0, T_0\Big] \Big| \tilde{\Gamma}_i. 
$$ 
\item[(B)] There exist weights $w_{h, 1}, \cdots, w_{h, T}, v_{h,T+1}, \cdots, v_{h, T + T_1}$ such that for $T_0 = T$
$$
\Big|\Big|\sum_{s< T} w_{h, s} \lambda_s - \sum_{t \ge T} v_{h, t} \lambda_t\Big|\Big|_2 = 0, \quad \sum_{s < T} w_{h, s} = 1, \quad \sum_{t \ge T} v_{h,t} = 1. 
$$ 
\end{itemize} 
\end{ass} 

Assumption \ref{ass:unc1a} states that we can match the endogenous factors ($\lambda_t$) exactly for some pre and post-treatment weights $w_h, v_h$. The remaining (random) factors $\tilde{\Lambda}_t$ are exogenous. Because we require exact matching for each entry of $\lambda_t$, we interpret the second restriction as imposing a sparsity restriction on $\lambda_t$.\footnote{We can relax the second condition in Assumption \ref{ass:unc1a} up to a small error of order $o(N_1^{-1/2} T_1^{-1/2})$.} As noted in Table \ref{tab:1} and discussed more extensively in Section \ref{sec:factors}, Assumption \ref{ass:unc1a} differs from usual low-rank restrictions in factor models, that instead typically assume that we can consistently estimate all interactive fixed effects. Instead, Assumption \ref{ass:unc1a} states that we can find a set of weights such that we can match the (low-dimensional) endogenous factors $\lambda_t$, whereas the remaining ones are exogenous. 


\begin{prop}[Horizontal identification] \label{lem:3a} Let Assumptions \ref{ass:outcomea}, \ref{ass:inda}, \ref{ass:unc1a} hold. Then for all $i \in \{1, \cdots, N, \cdots, N + N_1\}$ 
\begin{equation} \label{eqn:ajh} 
\small 
\begin{aligned} 
&  \sum_{t \ge T} v_{h, t} \mathbb{E}\Big[Y_{i,t}(0) | N_0 = N, T_0 = T, \tilde{\Gamma}_i\Big] = \sum_{s < T} w_{h, s} \mathbb{E}\Big[Y_{i,s} \Big| N_0 = N, T_0 = T, \tilde{\Gamma}_i\Big] + \beta_{0, h}(w_h, v_h)
\end{aligned} 
\end{equation}
for some constant $\beta_{0, h}(w_h, v_h)$, which depends on $T$, $w_h, v_h$ and any weights $w_{h}, v_h$ satisfying Assumption \ref{ass:unc1a}.
\end{prop}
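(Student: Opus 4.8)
\emph{Proof plan.} The statement is essentially a direct computation once the conditional independence in Assumption~\ref{ass:unc1a}(A) is used to strip the treatment-timing information out of the conditional expectations. Fix a unit $i$. The first step is to note that, by Assumption~\ref{ass:unc1a}(A), $Y_{i,t}(0)\perp[N_0,T_0]\mid\tilde{\Gamma}_i$, so that for every $t$
$$\mathbb{E}\big[Y_{i,t}(0)\mid N_0=N, T_0=T,\tilde{\Gamma}_i\big]=\mathbb{E}\big[Y_{i,t}(0)\mid\tilde{\Gamma}_i\big].$$
I would then substitute the factor model of Assumption~\ref{ass:outcomea} and simplify the right-hand side term by term. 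The interaction $(\lambda_t+\tilde{\Lambda}_t)^\top(\gamma_i+\tilde{\Gamma}_i)$ has conditional mean $\lambda_t^\top(\gamma_i+\tilde{\Gamma}_i)$ because, by Assumption~\ref{ass:inda}, $\tilde{\Lambda}_t\perp\tilde{\Gamma}_i$ and $\mathbb{E}[\tilde{\Lambda}_t]=0$; the idiosyncratic term has conditional mean zero because $\varepsilon_{i,t}\perp\tilde{\Gamma}_i$ and $\mathbb{E}[\varepsilon_{i,t}]=0$; and the fixed effects $\iota_{0,i},\iota_{1,t}$ are constants. Hence
$$\mathbb{E}\big[Y_{i,t}(0)\mid N_0=N, T_0=T,\tilde{\Gamma}_i\big]=\lambda_t^\top(\gamma_i+\tilde{\Gamma}_i)+\iota_{0,i}+\iota_{1,t}.$$

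The second step is to observe that for $s<T=T_0$ no unit is treated ($W_{i,s}=0$), so the observed outcome satisfies $Y_{i,s}=Y_{i,s}(0)$ and the display above also describes the pre-treatment conditional expectations appearing on the right-hand side of~\eqref{eqn:ajh}. I would then take the $v_h$-weighted sum over $t\ge T$ and the $w_h$-weighted sum over $s<T$. Using $\sum_{t\ge T}v_{h,t}=\sum_{s<T}w_{h,s}=1$, the unit fixed effect contributes $\iota_{0,i}$ to each side and drops out of the difference — this is exactly the step that forces the weights to sum to one and generates the intercept. Using Assumption~\ref{ass:unc1a}(B), $\sum_{t\ge T}v_{h,t}\lambda_t=\sum_{s<T}w_{h,s}\lambda_s$, so the terms $\big(\sum_{t\ge T}v_{h,t}\lambda_t\big)^\top(\gamma_i+\tilde{\Gamma}_i)$ and $\big(\sum_{s<T}w_{h,s}\lambda_s\big)^\top(\gamma_i+\tilde{\Gamma}_i)$ coincide and cancel as well (note this term is random but appears identically on both sides, so its cancellation does not require conditioning on it).

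What remains is
$$\sum_{t\ge T}v_{h,t}\,\mathbb{E}\big[Y_{i,t}(0)\mid N_0=N,T_0=T,\tilde{\Gamma}_i\big]-\sum_{s<T}w_{h,s}\,\mathbb{E}\big[Y_{i,s}\mid N_0=N,T_0=T,\tilde{\Gamma}_i\big]=\sum_{t\ge T}v_{h,t}\iota_{1,t}-\sum_{s<T}w_{h,s}\iota_{1,s},$$
which I would set equal to $\beta_{0,h}(w_h,v_h)$. It depends only on $T$ (through which periods are labelled pre/post) and on the weights $w_h,v_h$, and not on $i$ or on any random variable, so it is the required constant; since the argument used nothing about $w_h,v_h$ beyond Assumption~\ref{ass:unc1a}(B), it holds for any such weights. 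There is no substantive obstacle here: the only points requiring care are (i) correctly invoking Assumption~\ref{ass:unc1a}(A) to replace conditioning on $(N_0,T_0,\tilde{\Gamma}_i)$ by conditioning on $\tilde{\Gamma}_i$ alone — this is the conceptual heart, since the factors and loadings are otherwise allowed to depend arbitrarily on the assignment mechanism — and (ii) tracking the unit fixed effect $\iota_{0,i}$, whose cancellation is what necessitates the normalization $\sum w_{h,s}=\sum v_{h,t}=1$ and the presence of $\beta_{0,h}$.
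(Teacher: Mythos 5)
Your proof is correct and follows essentially the same route as the paper: strip the treatment-timing conditioning via Assumption \ref{ass:unc1a}(A), use the exogeneity and mean-zero properties of $\tilde{\Lambda}_t$ and $\varepsilon_{i,t}$ to reduce each conditional mean to $\lambda_t^\top(\gamma_i+\tilde{\Gamma}_i)+\iota_{0,i}+\iota_{1,t}$, cancel the unit fixed effect via the sum-to-one constraints and the $\lambda$ terms via the exact-matching condition, and identify $\beta_{0,h}(w_h,v_h)=\sum_{t\ge T}v_{h,t}\iota_{1,t}-\sum_{s<T}w_{h,s}\iota_{1,s}$, exactly as in Appendix \ref{proof:lem:3a}. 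Your write-up is in fact more explicit than the paper's about which assumption justifies each cancellation.
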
   

\begin{proof}[Proof of Proposition \ref{lem:3a}] See Appendix \ref{proof:lem:3a}
\end{proof} 

For known weights $w_h, v_h$, 
Proposition \ref{lem:3a} suggests regressing the outcomes of the control units onto a weighted average of the outcomes in the previous period as discussed in Appendix \ref{sec:estimation}. A similar result holds also for a Synthetic Control (vertical) regression, after inverting the role of the factor and loadings. 

\begin{ass}[Limited confoundedness over units] \label{ass:unc2a} Suppose that 
\begin{itemize} 
\item[(A)] For all $i, t$, for fixed $\gamma_i, \lambda_t$,
$$
Y_{i,t}(0) \perp \Big[N_0, T_0\Big] \Big| \tilde{\Lambda}_t. 
$$  
\item[(B)] There exist weights $w_{v, 1}, \cdots, w_{v, N-1}, v_{v, N}, \cdots, v_{v, N+1}$ such that for $N_0 = N$,
$$
\Big|\Big|\sum_{j < N} w_{v, j} \gamma_j - \sum_{n \ge N} v_{v, n} \gamma_n \Big|\Big|_2 = 0, \quad \sum_{j < N} w_{v, j} = 1, \quad \sum_{n \ge N} v_{v, n} = 1.  
$$ 
\end{itemize} 
\end{ass}


\begin{prop}[Vertical identification] \label{lem:verticala}  Let Assumptions \ref{ass:outcomea}, \ref{ass:inda}, \ref{ass:unc2a} hold. Then for all $t \in \{1, \cdots, T, \cdots, T + T_1\}$, 
$$
\small 
\begin{aligned} 
 \sum_{n \ge N} v_{v, n} \mathbb{E}\Big[Y_{n, t}(0) | N_0 = N, T_0 = T, \tilde{\Lambda}_t\Big] =  \sum_{j < N} w_{v, j} \mathbb{E}\Big[Y_{j,t} | N_0 = N, T_0 = T, \tilde{\Lambda}_t\Big] + \beta_{0,v}(w_v, v_v). 
\end{aligned} 
$$
for some constant $\beta_{0,v}(w_v, v_v)$ which only depends on $N$, $w_v, v_v$, and any weights $w_v, v_v$ satisfying Assumption \ref{ass:unc2a}. 
\end{prop}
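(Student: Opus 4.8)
The plan is to mirror the proof of Proposition \ref{lem:3a}, exchanging the roles of the factors and the loadings. As there, the statement reduces to (i) computing the conditional expectations on the two sides explicitly from the factor model, and (ii) cancelling the interactive term using the matching condition in Assumption \ref{ass:unc2a}(B) and the fact that both weight vectors sum to one. I expect the entire argument to be a short direct computation; no part of Assumption \ref{ass:iid} is needed.

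Concretely, I would proceed in two steps. \textbf{Step 1 (reduce the conditioning set).} By Assumption \ref{ass:unc2a}(A), $Y_{i,t}(0) \perp [N_0,T_0] \mid \tilde{\Lambda}_t$ for every $(i,t)$, so $\mathbb{E}[Y_{i,t}(0)\mid N_0=N, T_0=T, \tilde{\Lambda}_t] = \mathbb{E}[Y_{i,t}(0)\mid \tilde{\Lambda}_t]$ almost surely. I then expand $Y_{i,t}(0)$ with Assumption \ref{ass:outcomea} and take the conditional expectation given $\tilde{\Lambda}_t$: the quantities $\lambda_t,\gamma_i,\iota_{0,i},\iota_{1,t}$ are constants, while $\tilde{\Gamma}_i$ is independent of $\tilde{\Lambda}_t$ by Assumption \ref{ass:inda} with $\mathbb{E}[\tilde{\Gamma}_i]=0$, and $\varepsilon_{i,t}$ is likewise independent of $\tilde{\Lambda}_t$ with mean zero. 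Hence the cross terms $\tilde{\Lambda}_t^\top\tilde{\Gamma}_i$, $\lambda_t^\top\tilde{\Gamma}_i$ and $\varepsilon_{i,t}$ drop out in expectation and
$$\mathbb{E}\big[Y_{i,t}(0)\mid \tilde{\Lambda}_t\big] = (\lambda_t+\tilde{\Lambda}_t)^\top \gamma_i + \iota_{0,i} + \iota_{1,t}.$$
Since on the event $\{N_0=N\}$ any unit $j<N$ has $W_{j,t}=0$ for all $t$, we have $Y_{j,t}=Y_{j,t}(0)$, so the same formula governs the observed outcomes appearing on the right-hand side of the asserted identity. \textbf{Step 2 (weighted averages and cancellation).} Multiplying this formula by $v_{v,n}$ and summing over $n\ge N$, and separately by $w_{v,j}$ and summing over $j<N$, and using $\sum_{n\ge N} v_{v,n} = \sum_{j<N} w_{v,j} = 1$, the time fixed effect $\iota_{1,t}$ enters both sums with coefficient one and so cancels upon subtraction, while the interactive term becomes $(\lambda_t+\tilde{\Lambda}_t)^\top\big(\sum_{n\ge N} v_{v,n}\gamma_n - \sum_{j<N} w_{v,j}\gamma_j\big)$, which vanishes because Assumption \ref{ass:unc2a}(B) forces this vector to have zero $\ell_2$-norm. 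What remains is $\beta_{0,v}(w_v,v_v) := \sum_{n\ge N} v_{v,n}\iota_{0,n} - \sum_{j<N} w_{v,j}\iota_{0,j}$, yielding the claim for every $t\in\{1,\dots,T,\dots,T+T_1\}$.

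The calculation is routine, so the only points that deserve care are the following. First, one must invoke the precise independence statement in Assumption \ref{ass:inda} to guarantee $\tilde{\Gamma}_i\perp\tilde{\Lambda}_t$ and $\varepsilon_{i,t}\perp\tilde{\Lambda}_t$, which is what kills the cross terms in the conditional expectation. Second, one must check that the residual constant $\beta_{0,v}$ depends only on $N$ (through the index sets $\{j<N\}$, $\{n\ge N\}$) and on the weights, and in particular carries no dependence on the time index $t$ or on $\tilde{\Lambda}_t$ — in contrast to $\beta_{0,h}$ in Proposition \ref{lem:3a}, which depends on $T$ — so that a single intercept term simultaneously matches the pre- and post-treatment periods. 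This $t$-invariance is the substantive content of the statement, and it follows precisely because it is the \emph{unit} fixed effects $\iota_{0,i}$, not the time fixed effects, that survive the cancellation here.
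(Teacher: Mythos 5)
Your proposal is correct and follows essentially the same route as the paper: the paper proves Proposition \ref{lem:verticala} by noting it mimics the proof of Proposition \ref{lem:3a} with factors and loadings interchanged, which is exactly the computation you carry out (exogeneity of $\tilde{\Gamma}_i$ kills the cross terms, Assumption \ref{ass:unc2a}(B) cancels the interactive term, and the weights summing to one cancel the time fixed effects, leaving $\beta_{0,v}=\sum_{n\ge N}v_{v,n}\iota_{0,n}-\sum_{j<N}w_{v,j}\iota_{0,j}$). Your explicit identification of the intercept and the remark on its $t$-invariance are consistent with the paper's treatment of $\beta_{0,h}$ in Appendix \ref{proof:lem:3a}.
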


The proof mimics the one of Proposition \ref{lem:3a} once we invert the loadings and factors. 

Proposition \ref{lem:3a} and \ref{lem:verticala} provide identification restrictions for horizontal and vertical regression. The former assumes no high-dimensional time confounders, and the latter assumes no high-dimensional unit confounders.\footnote{Note that Proposition \ref{lem:3a}, \ref{lem:verticala} hold under (weaker) moment restrictions which only require mean exogeneity of $\tilde{\Lambda}_t, \tilde{\Gamma}_i$ instead of complete exogeneity of such random variables. In that case we can interpret, $\lambda_t = \mathbb{E}[\tilde{\Lambda}_t | T_0 = T], \gamma_i = \mathbb{E}[\tilde{\Gamma}_i | N_0 = N]$, as the conditional expectations of the factors and loadings, conditional on the treatment assignment.} 

In the context of Example \ref{exmp:main}, if we interpret $\tilde{\Gamma}_i$ as sectors in the economy, Assumption \ref{ass:unc2a} states that the decision to introduce the minimum wage may only depend on a few (aggregate) sectors but not on each of the sectors separately.

 \begin{rem}[Intercept and weights summing to one] The presence of an intercept in both vertical and horizontal identification strategies and the weights summing to one guarantee unbiasedness in the presence of time and unit fixed effects. Without such conditions, we would be unable to guarantee unbiasedness in the presence of fixed effects. This differs from \cite{shen2022tale}, who show that vertical and horizontal regressions lead to the same point estimates assuming \textit{no} intercepts and unconstrained weights. \qed 
 \end{rem}


\subsection{Double-robust identification}

Next, we study settings where \textit{either} unconfoundedness over time or units may occur. Consider the population equivalent of the Synthetic Differences-in-Difference (sDiD) in \cite{arkhangelsky2021synthetic}, here augmented with post-treatment weights $v_{v}, v_h$: 
$$
\small 
\begin{aligned} 
\bar{\tau}^{dr}(w_h, w_v, v_h,v_v) & = \sum_{j \ge N, t \ge T} v_{v, j} v_{h, t} \mathbb{E}\Big[\bar{\tau}_{j, t}^{dr}(w_h, w_v) \Big|N_0 = N, T_0 = T\Big] 
\end{aligned}
$$
where 
$$
\small 
\begin{aligned} 
\bar{\tau}_{N,T}^{dr}(w_h, w_v, v_h, v_v) & =  Y_{N,T} - \Big\{\sum_{s < T} w_{h, s} \mathbb{E}\Big[Y_{N,s} \Big| N_0 = N, T_0 = T\Big] + \sum_{j < N} w_{v, j} \mathbb{E}\Big[Y_{j,T} \Big| N_0 = N, T_0 = T\Big] \\ &\quad \quad \quad \quad \quad \quad - \sum_{s<T} \sum_{j < N} w_{h,s} w_{v,j} \mathbb{E}\Big[Y_{j,s} \Big| N_0 = N, T_0 = T\Big]\Big\}.
\end{aligned} 
$$ 

In the following proposition, we formalize double-robustness of $\bar{\tau}^{dr}(\cdot)$.



\begin{prop} \label{cor:2}  Suppose that Assumptions \ref{ass:outcomea}, \ref{ass:inda} hold. Then 
\begin{equation} \label{eqn:dr_main}
\small 
\begin{aligned} 
\tau - & \bar{\tau}^{dr}(w_h, w_v, v_h, v_v) = \Big(\bar{\Gamma}_{pre}(w_v) - \bar{\Gamma}_{post}(v_v) \Big)^\top\Big(\bar{\Lambda}_{pre}(w_h) - \bar{\Lambda}_{post}(v_h)\Big), 
\end{aligned} 
\end{equation} 
where 
$$
\small 
\begin{aligned} 
\bar{\Gamma}_{pre}(w_v) = \sum_{i < N} w_{v, i}  \Big(\mathbb{E}[\tilde{\Gamma}_i | N_0 = N] + \gamma_i\Big), \quad \bar{\Gamma}_{post}(v_v) = \sum_{i \ge N} v_{v, i}  \Big(\mathbb{E}[\tilde{\Gamma}_i | N_0 = N] + \gamma_i\Big) \\ 
\bar{\Lambda}_{pre}(w_h) = \sum_{t < T} w_{h,t} \Big( \mathbb{E}[\tilde{\Lambda}_t | T_0 = T] + \lambda_t\Big), \quad  \bar{\Lambda}_{post}(v_h) = \sum_{t \ge T} v_{h,t}  \Big( \mathbb{E}[\tilde{\Lambda}_t | T_0 = T] + \lambda_t\Big). 
\end{aligned} 
$$
\end{prop}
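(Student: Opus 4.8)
The plan is to evaluate $\bar\tau^{dr}(w_h,w_v,v_h,v_v)$ in closed form, one summand at a time, and then read off $\tau-\bar\tau^{dr}$; once the right conditional-mean formula is in hand, everything downstream is linear algebra. The first step is the identity, valid under Assumptions~\ref{ass:outcomea}--\ref{ass:inda}, that for every $(i,t)$
$$
\mathbb{E}\big[Y_{i,t}(0)\mid N_0=N,\ T_0=T\big]=\big(\lambda_t+\mathbb{E}[\tilde\Lambda_t\mid T_0=T]\big)^{\top}\big(\gamma_i+\mathbb{E}[\tilde\Gamma_i\mid N_0=N]\big)+\iota_{0,i}+\iota_{1,t}.
$$
I would derive this from Assumption~\ref{ass:inda}: because the unit block $[(\tilde\Gamma_i)_{i\ge 1},N_0]$ is independent of the time block $[(\tilde\Lambda_t)_{t\ge 1},T_0]$, conditioning on the joint event $\{N_0=N,T_0=T\}$ leaves $\tilde\Gamma_i$ and $\tilde\Lambda_t$ conditionally independent, with conditional laws depending only on $N_0=N$ and $T_0=T$ respectively; hence the bilinear term $(\lambda_t+\tilde\Lambda_t)^{\top}(\gamma_i+\tilde\Gamma_i)$ factorizes coordinate by coordinate in expectation into the displayed product, while $\mathbb{E}[\varepsilon_{i,t}\mid N_0=N,T_0=T]=\mathbb{E}[\varepsilon_{i,t}]=0$ by the second half of Assumption~\ref{ass:inda} and Assumption~\ref{ass:outcomea}. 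Writing $\Gamma_i^{\star}:=\gamma_i+\mathbb{E}[\tilde\Gamma_i\mid N_0=N]$ and $\Lambda_t^{\star}:=\lambda_t+\mathbb{E}[\tilde\Lambda_t\mid T_0=T]$, the four objects $\bar\Gamma_{pre}(w_v),\bar\Gamma_{post}(v_v),\bar\Lambda_{pre}(w_h),\bar\Lambda_{post}(v_h)$ are precisely the $w_v$-, $v_v$-, $w_h$-, $v_h$-weighted averages of $\Gamma_i^{\star}$ and $\Lambda_t^{\star}$.

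Next I would evaluate a single summand $\bar\tau_{j,t}^{dr}$ for a treated cell $j\ge N$, $t\ge T$. There $W_{j,t}=1$, so $\mathbb{E}[Y_{j,t}\mid N_0=N,T_0=T]=\tau+\mathbb{E}[Y_{j,t}(0)\mid\cdots]$, whereas every outcome inside the braces is either a pre-treatment outcome of a treated unit or an outcome of a control unit, hence equals its $Y(0)$ value; the bracketed terms are already $\sigma(N_0,T_0)$-measurable, so taking $\mathbb{E}[\,\cdot\mid N_0=N,T_0=T]$ leaves them unchanged. Substituting the Step-1 formula and using $\sum_{s<T}w_{h,s}=\sum_{k<N}w_{v,k}=1$, the unit effects $\iota_{0,\cdot}$ and time effects $\iota_{1,\cdot}$ telescope away, and the four remaining bilinear pieces collapse, via the rank-one identity $a^{\top}c-b^{\top}c-a^{\top}d+b^{\top}d=(a-b)^{\top}(c-d)$, into
$$
\mathbb{E}\big[\bar\tau_{j,t}^{dr}\mid N_0=N,T_0=T\big]=\tau+\big(\Lambda_t^{\star}-\bar\Lambda_{pre}(w_h)\big)^{\top}\big(\Gamma_j^{\star}-\bar\Gamma_{pre}(w_v)\big).
$$
Finally, summing over $j\ge N$, $t\ge T$ with weights $v_{v,j}v_{h,t}$ and using $\sum_{j\ge N}v_{v,j}=\sum_{t\ge T}v_{h,t}=1$, the averaging passes through the bilinear form coordinatewise, replacing $\Lambda_t^{\star}$ by $\bar\Lambda_{post}(v_h)$ and $\Gamma_j^{\star}$ by $\bar\Gamma_{post}(v_v)$, which gives $\bar\tau^{dr}=\tau+(\bar\Lambda_{post}(v_h)-\bar\Lambda_{pre}(w_h))^{\top}(\bar\Gamma_{post}(v_v)-\bar\Gamma_{pre}(w_v))$; rearranging the two differences delivers \eqref{eqn:dr_main}.

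Everything downstream of the first step is bookkeeping, so the one place needing genuine care is the conditional-mean identity: one must argue cleanly that conditioning on the \emph{joint} assignment event does not disturb the conditional law of the time factors given $T_0$ (nor of the loadings given $N_0$), and that the resulting conditional independence produces the coordinatewise factorization of $\mathbb{E}[(\lambda_t+\tilde\Lambda_t)^{\top}(\gamma_i+\tilde\Gamma_i)\mid N_0=N,T_0=T]$. This is exactly where Assumption~\ref{ass:inda} is used, and it is what makes the time and unit confounding channels separable; the fixed-effect telescoping and the rank-one collapse then go through solely because all four weight vectors sum to one, as built into the sDiD estimand.
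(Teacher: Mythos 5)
Your proposal is correct and follows essentially the same route as the paper's proof in Appendix \ref{proof:dr}: both compute the conditional means of the outcomes from the factor model, use Assumption \ref{ass:inda} to factorize $\mathbb{E}[\Lambda_t^\top\Gamma_i\mid N_0=N,T_0=T]$ into a product of the separate conditional means, cancel the fixed effects via the weights summing to one, and collapse the four bilinear terms into a single product of pre/post differences before averaging with $v_v,v_h$ (the paper merely defers the expectation over $\tilde\Gamma_i,\tilde\Lambda_t$ to the last step by first conditioning on the treated cell's own $\Gamma_N,\Lambda_T$, which is immaterial). Note only that your final expression $(\bar\Lambda_{post}-\bar\Lambda_{pre})^\top(\bar\Gamma_{post}-\bar\Gamma_{pre})$ for $\bar\tau^{dr}-\tau$ agrees with what the paper's own algebra yields, and both differ from the displayed \eqref{eqn:dr_main} by an overall sign that does not affect the double-robustness conclusion.
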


\begin{proof} See Appendix \ref{proof:dr}. 
\end{proof} 

Proposition \ref{cor:2} shows the robustness properties of the Synthetic DiD method: the method is robust to imperfect match over time \textit{and} units. Here,  $\bar{\Gamma}_{pre}(w_v),  \bar{\Gamma}_{post}(v_v)$ denote the pre and post treatment \textit{expectation} of the loadings, and $\bar{\Lambda}_{pre}(w_h)$, $\bar{\Lambda}_{post}(v_h)$ of the factors (\textit{conditional} on the event $N_0 = N, T_0 = T$). Under Assumption \ref{ass:unc1a}, $\Big(\bar{\Lambda}_{pre}(w_h) - \bar{\Lambda}_{post}(v_h)\Big) = 0$, and under Assumption \ref{ass:unc2a}, $\Big(\bar{\Gamma}_{pre}(w_v) - \bar{\Gamma}_{post}(v_v) \Big) = 0$.

Equation \eqref{eqn:dr_main} connects to previous results in the double robust literature \citep[e.g.][]{farrell2015robust}. However, here we intend double-robustness at the population level for $\bar{\tau}^{dr}$, as a function of the time and unit level mismatch, instead of a function of the estimators' convergence rates. It also differs from the analysis in \cite{arkhangelsky2021synthetic} because we characterize double robustness as a function of the distributional properties of the factors and loadings.

\subsection{Implications for estimators} \label{sec:implications}

We conclude this discussion with a short summary of our findings in this section. 
Assuming that we know the weights and intercepts $w_h, v_h, \beta_h, w_v, v_v, \beta_v$ for the moment, we can construct three estimators 
\begin{equation} \label{eqn:estimators1}
\small 
\begin{aligned} 
\hat{\tau}_n^h(w_h, v_h, \beta_h) & = \sum_{t \ge T} v_{h,t} Y_{n,t}  - \sum_{s < T} w_{n,s} Y_{n,s} + \beta_h \\ 
 \hat{\tau}_t^v(w_v, v_v, \beta_v) & =  \sum_{n \ge N} v_{v,n} Y_{t,n} - \sum_{i < N} w_{v,i} Y_{t,i} + \beta_v \\ 
 \hat{\tau}^{dr}(w_h, w_v. v_h, v_v) & = \sum_{t \ge T, n \ge N} v_{h,t} v_{v, n} \Big\{Y_{t, n}  - \sum_{s < T} w_{h, s} Y_{n,s} - \sum_{i < N} w_{v, i} Y_{i, t} + \sum_{i < N} \sum_{s < T} w_{v, i} w_{h, s} Y_{i,s} \Big\}
\end{aligned} 
\end{equation} 
corresponding to the horizontal regression for unit $n$, vertical for unit $t$, and double robust estimator. For the horizontal and vertical regression, we take
\begin{equation} \label{eqn:estimators2} 
\small 
\begin{aligned} 
\hat{\tau}^h(w_h, v_h, \beta_h, q_h) & = \sum_{n \ge N} q_{h,n} \hat{\tau}_n^h(w_h, v_h, \beta_h), \quad &\sum_{n \ge N} q_{h,n} = 1 \\ 
\hat{\tau}^v(w_v, v_v, \beta_v, q_v)  & = \sum_{t \ge T} q_{v,t} \hat{\tau}_t^v(w_v, v_v, \beta_v), \quad &\sum_{t \ge T} q_{v,t} = 1, 
\end{aligned} 
\end{equation}  
where $q_h, q_v$ are arbitrary weights that sum to one. A simple example is $q_{h, n} = 1/N_1, q_{v,t} = 1/T_1$. The additional weights $q_h, q_v$ and the post-treatment weights $v_h, v_v$  are motivated treatment effect homogeneity, see Remark \ref{rem:heterogeneous}. In the following proposition, we illustrate when each of these estimators is unbiased.

\begin{prop}[Conditioning sets] \label{lem:expectations} Under Assumptions \ref{ass:outcomea}, \ref{ass:inda}, for $n \ge N, t \ge T$, 
$$
\begin{aligned} 
(A) \quad  & \tau - \mathbb{E}\left[\hat{\tau}_n^h(w_h, v_h, \beta_h) \Big| N_0 = N, T_0 = T, \Big(\tilde{\Gamma}_i\Big)_{i\ge 1}\right] = 0, \text{ if Assumption } \ref{ass:unc1a} \text{ holds}. \\
 (B) \quad  & \tau - \mathbb{E}\left[\hat{\tau}_t^v(w_v, v_v, \beta_v)\Big| N_0 = N, T_0 = T, \Big(\tilde{\Lambda}_t\Big)_{t\ge 1} \right]  = 0 ,  \text{ if Assumption } \ref{ass:unc2a} \text{ holds}.  \\
(C) \quad & \tau - \mathbb{E}\left[\hat{\tau}^{dr}(w_h, w_v, v_h, v_v)\Big| N_0 = N, T_0 = T \right]  = 0,  \text{ if  either Assumption } \ref{ass:unc1a} \text{ or } \ref{ass:unc2a} \text{ holds}. 
\end{aligned} 
$$ 
\end{prop}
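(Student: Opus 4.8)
The plan is to obtain all three parts as immediate consequences of the identification results already in hand --- Propositions~\ref{lem:3a}, \ref{lem:verticala} and \ref{cor:2} --- using only the constant-treatment-effect representation \eqref{eqn:1} and the normalization that the post-treatment weights sum to one. Throughout I take the intercepts to be the oracle values $\beta_h=-\beta_{0,h}(w_h,v_h)$ and $\beta_v=-\beta_{0,v}(w_v,v_v)$, with $\beta_{0,h},\beta_{0,v}$ the constants appearing in Propositions~\ref{lem:3a} and \ref{lem:verticala}, and the weights $w_h,v_h$ (resp.\ $w_v,v_v$) to be those satisfying Assumption~\ref{ass:unc1a}(B) (resp.\ Assumption~\ref{ass:unc2a}(B)).

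\textbf{Part (A).} For $n\ge N$, $t\ge T$ the observed outcome is $Y_{n,t}=Y_{n,t}(1)=\tau+Y_{n,t}(0)$, whereas $Y_{n,s}=Y_{n,s}(0)$ for $s<T$. Plugging these into $\hat\tau_n^h$ in \eqref{eqn:estimators1} and using $\sum_{t\ge T}v_{h,t}=1$ extracts a term $\tau$ and leaves the bracket $\sum_{t\ge T}v_{h,t}Y_{n,t}(0)-\sum_{s<T}w_{h,s}Y_{n,s}(0)+\beta_h$. I would then take $\mathbb{E}[\cdot\mid N_0=N,T_0=T,(\tilde\Gamma_i)_{i\ge1}]$ and argue that, for the outcomes of unit $n$, this conditioning set can be reduced to $\{N_0=N,T_0=T,\tilde\Gamma_n\}$: given $\tilde\Gamma_n$, each $Y_{n,t}(0)$ is a function only of $(\tilde\Lambda_t,\varepsilon_{n,t})$, which by Assumptions~\ref{ass:inda} and \ref{ass:iid} are independent of $\big((\tilde\Gamma_i)_{i\ne n},N_0\big)$ and whose conditional law given $T_0=T$ is unaffected by the extra conditioning. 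Proposition~\ref{lem:3a} (applied with $i=n$) then says $\sum_{t\ge T}v_{h,t}\mathbb{E}[Y_{n,t}(0)\mid\cdot]-\sum_{s<T}w_{h,s}\mathbb{E}[Y_{n,s}\mid\cdot]=\beta_{0,h}(w_h,v_h)$, so the bracket has conditional mean $\beta_{0,h}(w_h,v_h)+\beta_h=0$ and hence $\mathbb{E}[\hat\tau_n^h\mid N_0=N,T_0=T,(\tilde\Gamma_i)_{i\ge1}]=\tau$. Part (B) is the mirror image: write $Y_{n,t}=\tau+Y_{n,t}(0)$ for $n\ge N$, $Y_{i,t}=Y_{i,t}(0)$ for $i<N$, use $\sum_{n\ge N}v_{v,n}=1$, reduce the conditioning from $(\tilde\Lambda_t)_{t\ge1}$ to $\tilde\Lambda_t$ by Assumptions~\ref{ass:inda}--\ref{ass:iid}, and invoke Proposition~\ref{lem:verticala} together with $\beta_v=-\beta_{0,v}(w_v,v_v)$.

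\textbf{Part (C).} First I would check that $\mathbb{E}[\hat\tau^{dr}(w_h,w_v,v_h,v_v)\mid N_0=N,T_0=T]=\bar\tau^{dr}(w_h,w_v,v_h,v_v)$. By linearity of expectation, each control-cell outcome in the definition of $\hat\tau^{dr}$ in \eqref{eqn:estimators1} --- i.e.\ $Y_{n,s}$ with $s<T$, $Y_{i,t}$ with $i<N$, and $Y_{i,s}$ with $i<N$, $s<T$ --- is sent to its conditional expectation given $\{N_0=N,T_0=T\}$, which is precisely how those terms enter $\bar\tau^{dr}_{N,T}$, whereas the treated cell $Y_{n,t}$ ($n\ge N$, $t\ge T$) appears in $\bar\tau^{dr}_{N,T}$ without an expectation; regrouping against the weights $v_{h,t}v_{v,n}$ reproduces $\bar\tau^{dr}$ term by term. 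Proposition~\ref{cor:2} then gives $\tau-\bar\tau^{dr}=\big(\bar\Gamma_{pre}(w_v)-\bar\Gamma_{post}(v_v)\big)^\top\big(\bar\Lambda_{pre}(w_h)-\bar\Lambda_{post}(v_h)\big)$. As recorded immediately after that proposition, Assumption~\ref{ass:unc1a} makes the second factor vanish (the random factors $\tilde\Lambda_t$ are exogenous so $\mathbb{E}[\tilde\Lambda_t\mid T_0=T]=0$, and the constant factors $\lambda_t$ are matched exactly by Assumption~\ref{ass:unc1a}(B)), while Assumption~\ref{ass:unc2a} symmetrically makes the first factor vanish. In either case the product is zero, so $\tau-\mathbb{E}[\hat\tau^{dr}\mid N_0=N,T_0=T]=0$.

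\textbf{Where the work is.} All the real content lives in Propositions~\ref{lem:3a}, \ref{lem:verticala} and \ref{cor:2}; the remaining burden is purely organizational. The two points that need genuine attention are (i) justifying the replacement of conditioning on the full sequence $(\tilde\Gamma_i)_{i\ge1}$ --- resp.\ $(\tilde\Lambda_t)_{t\ge1}$ --- by conditioning on the single relevant loading $\tilde\Gamma_n$ --- resp.\ factor $\tilde\Lambda_t$ --- before applying the identification lemmas, which is exactly where Assumption~\ref{ass:inda} does its work; and (ii) keeping straight which of the cells in the double sum defining $\hat\tau^{dr}$ are treated and which are control when passing the conditional expectation inside. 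Neither step goes beyond the independence structure of Assumptions~\ref{ass:inda}--\ref{ass:iid} and linearity of expectation, so I expect no real obstacle.
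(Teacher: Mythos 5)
Your proposal is correct and takes essentially the same route as the paper, which simply notes that the result "follows directly from Propositions \ref{lem:3a}, \ref{lem:verticala} and \ref{cor:2}"; you supply the same reduction, merely spelling out the bookkeeping (oracle intercepts, the reduction of the conditioning set via Assumption \ref{ass:inda}, and the identification of $\mathbb{E}[\hat\tau^{dr}\mid N_0=N,T_0=T]$ with $\bar\tau^{dr}$) that the paper leaves implicit.
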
 

\begin{proof}[Proof of Proposition \ref{lem:expectations}]  The proof follows directly from Propositions \ref{lem:3a}, \ref{lem:verticala} and \ref{cor:2}. 
\end{proof}

Proposition \ref{lem:expectations} has important implications for inference. It shows that we can condition on the loadings for a horizontal regression and factors for a vertical regression, but \textit{not} vice-versa. Inference with the Synthetic DiD should take into account the randomness generated by \textit{both} the factors and loadings. We formalize these intuitions in the following section.

\begin{rem}[Heterogeneous treatment effects] 
\label{rem:heterogeneous}  Our analysis assumes that treatment effects are constant, i.e., $Y_{i,t}(1) = Y_{i,t}(0) + \tau$. Consider instead settings with heterogeneous (additive) treatment effects $\tau_{i,t} = Y_{i,t}(1) - Y_{i,t}(0)$. 
Effects heterogeneity may, for example, be relevant when treatment intensity varies by state due to different regulatory systems. Estimated treatment effects denote weighted averages, which depend on such heterogeneity.   In the context of Synthetic Controls (vertical regression), 
researchers may be able to identify a weighted combination of treatment effects $\sum_{t \ge T} q_{v, t} \sum_{n \ge N}  v_{v, n} \tau_{n, t}$ where the weights $v_{v, n}$ must satisfy the balancing restriction in Assumption \ref{ass:unc2a}. Intuitively, if we can guarantee that Assumption \ref{ass:unc2a} holds for $v_{v, n} = 1/N_1$, we can recover the average effect on the treated unit when choosing $q_{v,t} = 1/T$. In contrast, if Assumption \ref{ass:unc2a} holds for some other weighting mechanism, we can only recover a weighted combination of treatment effects. This intuition formalizes the trade-offs of matching restrictions on the weights $(v_h, v_v)$ and the identified class of estimands.
On the other hand, because $q_h, q_v$ can be chosen arbitrarily by the researcher, we recommend as choices for $q_h, q_v$, $q_h = 1/N_1 \mathbf{1}, q_v = 1/T_1 \mathbf{1}$, i.e., imposing equal weights on the estimators.\footnote{This choice is recommended under the failure of homogeneous treatment effects when the goal is to study the average effect on the treated unit. A second possible choice is to minimize the variance of the estimator, which, however, would require estimating the factors and loadings (and would affect the estimand of interest in the absence of homogeneous treatment effects).} \qed 
\end{rem}

\begin{table}[!htbp] \centering 
  \caption[Caption for LOF]{Summary of the identification strategies of different regressions. The ``Source of randomness" denotes the residual source of variation after conditioning on the endogenous variables (and that should be considered for confidence intervals presented in Section \ref{sec:inference_main}). ``Confounding effects" denote the components whose expectation is not zero, given the assignment mechanism. } 
  \label{tab:methods1} 
  \scalebox{0.65}{
\begin{tabular}{@{\extracolsep{5pt}} ccccc} 
\\[-1.8ex]\hline 
\hline \\[-1.8ex] 
Regression & Assumption & Source of randomness & Confounding Effects \\ 
\hline \\[-1.8ex] 
Horizontal 
 &  Restricted time confounding ($\tilde{\Lambda}_t \perp T_0$)   & $(\tilde{\Lambda}_t, \varepsilon_{i,t})$ & $(\lambda_t, \gamma_i + \tilde{\Gamma}_i)$  \\ 
 & + Exact matching of factors' conditional expectations & \\
& $(\sum_{s \le T} w_{h,s} \lambda_s = \sum_{t \ge T} v_{h,t} \lambda_t$ for some $w_h, v_h)$ &  
 \\
 & & \\ 
Vertical 
 &  Restricted unit confounding ($\tilde{\Gamma}_i \perp N_0$) & $(\tilde{\Gamma}_i,  \varepsilon_{i,t})$ & $(\gamma_i, \lambda_t + \tilde{\Lambda}_t)$ \\ 
 & + Exact matching of loadings' conditional expectations  & \\
 &$(\sum_{i \le N} w_{v,i} \gamma_i = \sum_{j \ge N} v_{v,j} \gamma_j$ for some $w_v, v_v$) & \\
& & \\ 
Synthetic DiD
 &  \textit{Either} no high dimensional unit & $(\tilde{\Gamma}_i, \tilde{\Lambda}_t, \varepsilon_{i,t})$ &  Either $(\lambda_t + \tilde{\Lambda}_t, \gamma_i)$  \\ 
 & or time confounders (and matching over time or units)  & &  or $(\lambda_t, \tilde{\Gamma}_i + \gamma_i)$ \\
 & & \\ 
PCA/Least Squares
 &  No high rank confounders & $\varepsilon_{i,t}$ & $(\tilde{\Gamma}_i + \gamma_i, \tilde{\Lambda}_t + \lambda_t)$ \\ 
 &  & \\

\hline \\[-1.8ex] 
\end{tabular} 
}
\end{table}

\section{Inference over the post-treatment period} \label{sec:inference_main}

This section studies inference for vertical, horizontal regression and Synthetic DiD. We consider the estimators in Equation \eqref{eqn:estimators2} for the horizontal and vertical regression and $\hat{\tau}^{dr}(\cdot)$ in Equation \eqref{eqn:estimators1} for the Synthetic DiD. 
We will condition on the event 
$
N_0 = N, T_0 = T$. We study asymptotics as  $N, T, N_1, T_1 \rightarrow \infty,   
$ 
with $N_1, T_1$ growing at an appropriate slower rate than $N, T$. We return to settings with a finite number of treated units in Appendix \ref{sec:placebo}. 

We study inference given some known (not data-dependent) weights
 $(w_h^\star, v_h^\star,\beta_h^\star), (w_v^\star, v_v^\star,\beta_v^\star)$. We will assume that these weights satisfy the restrictions in Assumptions \ref{ass:unc1a} and \ref{ass:unc2a}, respectively, and regularity conditions presented below.  
Corollary \ref{lem:aa} shows that estimation in the absence of estimation error is asymptotically equivalent to inference with estimated weights -- assuming that the post-treatment period is sufficiently shorter than the pre-treatment period. 
We study the estimation of the weights in Section \ref{sec:estimation}. 

As shown in Proposition \ref{lem:expectations}, inference must account for different sources of randomness depending on the nature of confounding: estimators are unbiased only unconditional (but not necessarily conditional) on either time or unit-level confounders, depending on the confounding assumption. We formalize this intuition below.

\subsection{Inference with horizontal and vertical regression} \label{sec:inference1}

We present assumptions and results for the horizontal regression first.

\begin{ass}[Horizontal regression conditions] \label{ass:horizontal_inference} Assume the following: 
\begin{itemize} 
\item[(A)] $T_1 N_1/T_0^{1/3}= o(1)$; 
\item[(B)] $||w_h^\star||_{\infty} = \mathcal{O}(T_0^{-2/3}), ||v_h^\star||_{\infty} = \mathcal{O}(T_1^{-2/3})$, and $w^\star, v^\star$ satisfy condition (B) in Assumption \ref{ass:unc1a}. 
\end{itemize} 
\end{ass}

Condition (A) states that the post-treatment period is much shorter than the pre-treatment period. Condition (B) assumes that the oracle weights guarantee pre and post treatment balance in the factors. Condition (B) assumes that no single (or few) units receive all the weights similar to \cite{arkhangelsky2018role}, ruling out sparse settings.



\begin{thm} \label{thm:horizontal_asym} Suppose that Assumptions \ref{ass:outcomea}, \ref{ass:inda}, \ref{ass:iid}, \ref{ass:unc1a},  \ref{ass:horizontal_inference} hold. Then as $T_1 \rightarrow \infty$, for any $q_h: \sum_{n \ge N} q_{h,n} = 1$, 
$$
\small 
\begin{aligned} 
 \frac{ \Big(\hat{\tau}^h(w_h^\star, v_h^\star, \beta_h^\star, q_h) - \tau\Big)}{\bar{\mathbb{V}}_h^{1/2}(N, T, q_h)}  \rightarrow_d \mathcal{N}(0,1)
\end{aligned} 
$$ 
where $\bar{\mathbb{V}}_h(N, T, q_h) = \mathbb{V}\left(  \sum_{n \ge N} q_{h, n} \sum_{t \ge T} v_{h,t}^\star Y_{n,t}  \Big| N_0 = N, T_0 = T, \Big(\tilde{\Gamma}_i\Big)_{i\ge 1}\right)$.
\end{thm}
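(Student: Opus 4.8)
The plan is to work conditionally on $\mathcal F:=\{N_0=N,\,T_0=T,\,(\tilde\Gamma_i)_{i\ge 1}\}$: reduce the centered estimator to a sum of conditionally independent mean‑zero terms, identify its conditional variance, show that the pre‑treatment part of that variance is negligible relative to $\bar{\mathbb V}_h$, and conclude with a Lindeberg central limit theorem followed by a conditioning‑out step.

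\textit{Conditional law and decomposition.} By Assumptions~\ref{ass:outcomea}, \ref{ass:inda}, \ref{ass:iid} and \ref{ass:unc1a}(A), conditionally on $\mathcal F$ the factors $(\tilde\Lambda_t)_{t\ge 1}$ are i.i.d.\ with mean $0$, variance $\Sigma_{\tilde\Lambda}$ and $\sup_t\|\tilde\Lambda_t\|_2<\infty$ a.s., the shocks $(\varepsilon_{n,t})$ are i.i.d.\ with mean $0$, variance $\sigma_\varepsilon^2$ and finite third moment, the two families are independent, and $\bar G:=\sum_{n\ge N}q_{h,n}(\gamma_n+\tilde\Gamma_n)$ is $\mathcal F$‑measurable. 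Substituting Assumption~\ref{ass:outcomea} into $\hat\tau^h(w_h^\star,v_h^\star,\beta_h^\star,q_h)$, using \eqref{eqn:1} on the treated post‑treatment cells, cancelling the unit fixed effects through $\sum_{t\ge T}v_{h,t}^\star=\sum_{s<T}w_{h,s}^\star=1$, the $\lambda_t$ term through Assumption~\ref{ass:unc1a}(B), and the time fixed effects through the intercept $\beta_h^\star$ (which by Proposition~\ref{lem:3a}, equivalently Proposition~\ref{lem:expectations}(A), is exactly the constant rendering the estimator conditionally unbiased), I obtain, with $c_t:=v_{h,t}^\star$ for $t\ge T$ and $c_t:=-w_{h,t}^\star$ for $t<T$,
$$
\hat\tau^h(w_h^\star,v_h^\star,\beta_h^\star,q_h)-\tau\;=\;\sum_{t}c_t\,\bar G^\top\tilde\Lambda_t\;+\;\sum_{n\ge N}q_{h,n}\sum_{t}c_t\,\varepsilon_{n,t}\;=:\;S_\Lambda+S_\varepsilon .
$$
Conditionally on $\mathcal F$, $S_\Lambda$ and $S_\varepsilon$ are each sums of independent mean‑zero terms and $S_\Lambda\perp S_\varepsilon$. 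Write $\|c\|_\infty:=\max_t|c_t|$ and $\|c\|_2^2:=\sum_t c_t^2=\|v_h^\star\|_2^2+\|w_h^\star\|_2^2$, with $\|v_h^\star\|_2^2:=\sum_{t\ge T}(v_{h,t}^\star)^2$, $\|w_h^\star\|_2^2:=\sum_{s<T}(w_{h,s}^\star)^2$; by Assumption~\ref{ass:horizontal_inference}(B) (and $T_0\gg T_1$) one has $\|c\|_\infty=\mathcal O(T_1^{-2/3})$, while $\|v_h^\star\|_2^2\ge T_1^{-1}$ by Cauchy--Schwarz.

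\textit{Variance reduction.} The display gives $\mathrm{Var}(\hat\tau^h-\tau\mid\mathcal F)=\|c\|_2^2\big(\bar G^\top\Sigma_{\tilde\Lambda}\bar G+\sigma_\varepsilon^2\|q_h\|_2^2\big)=:s^2$, whereas the stated normalizer is $\bar{\mathbb V}_h(N,T,q_h)=\|v_h^\star\|_2^2\big(\bar G^\top\Sigma_{\tilde\Lambda}\bar G+\sigma_\varepsilon^2\|q_h\|_2^2\big)$, which is strictly positive because $\sigma_\varepsilon^2>0$. Hence the \emph{nonrandom} ratio $s^2/\bar{\mathbb V}_h=1+\|w_h^\star\|_2^2/\|v_h^\star\|_2^2$, and since $\|w_h^\star\|_2^2\le(T_0-1)\|w_h^\star\|_\infty^2=\mathcal O(T_0^{-1/3})$ by Assumption~\ref{ass:horizontal_inference}(B), this equals $1+\mathcal O(T_0^{-1/3}T_1)=1+o(1)$ by Assumption~\ref{ass:horizontal_inference}(A) (which forces $T_1=o(T_0^{1/3})$). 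It therefore suffices to prove the stated limit with $\bar{\mathbb V}_h$ replaced by $s^2$.

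\textit{Lindeberg CLT, and the crux.} Conditionally on $\mathcal F$, $\hat\tau^h-\tau$ is a sum of the independent mean‑zero summands $\eta_t:=c_t\bar G^\top\tilde\Lambda_t$ and $\zeta_{n,t}:=q_{h,n}c_t\varepsilon_{n,t}$, with total variance $s^2>0$; I verify Lindeberg's condition for this array normalized by $s$. For the $\eta$'s, $|\eta_t|\le\|c\|_\infty\|\bar G\|_2\sup_t\|\tilde\Lambda_t\|_2$ a.s., and $s^2\ge\mathrm{Var}(S_\Lambda\mid\mathcal F)=\|c\|_2^2\,\bar G^\top\Sigma_{\tilde\Lambda}\bar G\ge\|c\|_2^2\,\lambda_{\min}(\Sigma_{\tilde\Lambda})\|\bar G\|_2^2$, so $\max_t|\eta_t|^2/s^2=\mathcal O\!\big(\|c\|_\infty^2/\|c\|_2^2\big)=\mathcal O(T_1^{-1/3})\to0$ --- the factor $\|\bar G\|_2^2$ cancels, so this holds for every realization of $\mathcal F$, including $\bar G=0$ (where the $\eta_t$ vanish) --- and hence the $\eta$‑part of the Lindeberg sum is eventually $0$. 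For the $\zeta$'s, $|\zeta_{n,t}|\le\|q_h\|_\infty\|c\|_\infty|\varepsilon_{n,t}|$, and $s^2\ge\mathrm{Var}(S_\varepsilon\mid\mathcal F)=\sigma_\varepsilon^2\|q_h\|_2^2\|c\|_2^2\ge\sigma_\varepsilon^2\|q_h\|_\infty^2T_1^{-1}$, so the self‑normalized threshold $s/(\|q_h\|_\infty\|c\|_\infty)\ge\sigma_\varepsilon T_1^{-1/2}/\|c\|_\infty$ is bounded below by a constant multiple of $T_1^{1/6}$ and diverges; therefore, for every $\epsilon>0$,
$$
\frac{1}{s^2}\sum_{n\ge N,\,t}\mathbb{E}\!\Big[\zeta_{n,t}^2\,\mathbf{1}\{|\zeta_{n,t}|>\epsilon s\}\;\big|\;\mathcal F\Big]\;\le\;\frac{1}{\sigma_\varepsilon^2}\,\mathbb{E}\!\Big[\varepsilon^2\,\mathbf{1}\{|\varepsilon|>\epsilon\, s/(\|q_h\|_\infty\|c\|_\infty)\}\Big]\;\longrightarrow\;0
$$
by dominated convergence ($\mathbb{E}\varepsilon^2<\infty$). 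Lindeberg's theorem then yields $(\hat\tau^h-\tau)/s\to_d\mathcal N(0,1)$ conditionally on $\mathcal F$, for every realization of $\mathcal F$; combined with the variance reduction this gives $(\hat\tau^h-\tau)/\bar{\mathbb V}_h^{1/2}\to_d\mathcal N(0,1)$ conditionally, hence unconditionally by bounded convergence of conditional characteristic functions (the limit law not depending on $\mathcal F$). The one delicate point is this CLT step: a Lyapunov argument built on the third moment of Assumption~\ref{ass:outcomea} is in fact too crude, since it introduces a spurious dependence on how concentrated $q_h$ and $v_h^\star$ are and would force an extra rate restriction relating $N_1$ and $T_1$; the Lindeberg route avoids this because the only property of $q_h$ it uses is the trivial $\|q_h\|_\infty\le\|q_h\|_2$, so the $\|q_h\|_\infty$ factor cancels against $\mathrm{Var}(S_\varepsilon\mid\mathcal F)$, and the whole argument is scale‑free in $\bar G$, which is essential because $\bar G^\top\Sigma_{\tilde\Lambda}\bar G$, hence $\bar{\mathbb V}_h$, need not be bounded away from $0$. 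Assumption~\ref{ass:horizontal_inference}(B) is used only to make the Lindeberg threshold diverge; Assumption~\ref{ass:horizontal_inference}(A) only to kill the pre‑treatment variance.
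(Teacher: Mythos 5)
Your proof is correct and follows the same architecture as the paper's: condition on $\{N_0=N,T_0=T,(\tilde\Gamma_i)_{i\ge1}\}$, use Proposition \ref{lem:expectations}(A) for conditional unbiasedness, split the centered estimator into a factor-driven part $\bar G^\top\sum_t c_t\tilde\Lambda_t$ and an idiosyncratic part $\sum_{n,t}q_{h,n}c_t\varepsilon_{n,t}$, apply a CLT to the post-treatment contribution, and show the pre-treatment contribution is negligible. You differ in two places, both to your advantage. First, the paper bounds the pre-treatment term as an $\mathcal{O}_p(T_0^{-1/6})$ remainder and compares it to the crude lower bound $\bar{\mathbb V}_h\ge\sigma_\varepsilon^2/(N_1T_1)$, which is where it spends the full strength of $N_1T_1/T_0^{1/3}=o(1)$; you instead fold the pre-treatment shocks into the CLT array and observe that the common factor $\bar G^\top\Sigma_{\tilde\Lambda}\bar G+\sigma_\varepsilon^2\|q_h\|_2^2$ cancels in the ratio $s^2/\bar{\mathbb V}_h=1+\|w_h^\star\|_2^2/\|v_h^\star\|_2^2=1+\mathcal{O}(T_1T_0^{-1/3})$, which only uses $T_1=o(T_0^{1/3})$ and is exact rather than stochastic. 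Second, the paper invokes Lyapunov via (signed) third moments and asserts without detail that the argument survives aggregation over the $N_1$ treated units; your Lindeberg verification is scale-free in $\bar G$ (handling the degenerate case $\bar G=0$ explicitly, which the paper's normalization by $\bar G^\top\Sigma_{\tilde\Lambda}\bar G+\sigma_\varepsilon^2\|q_h\|_2^2>0$ silently relies on) and uses only $\|q_h\|_\infty\le\|q_h\|_2$, so it needs no third-moment bookkeeping and no restriction linking $N_1$ to $T_1$. Your caveat about Lyapunov is slightly overstated---applying Lyapunov separately to the factor and noise components, with the $\|\bar G\|^3$ cancellation you exploit for Lindeberg, also works---but the Lindeberg route is cleaner, and your final conditioning-out step via conditional characteristic functions makes explicit a passage the paper leaves implicit.
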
 

\begin{proof} See Appendix \ref{proof:thm_hor1}.  
\end{proof}

Theorem \ref{thm:horizontal_asym} shows that the horizontal estimator, \textit{for any} weighted combination of treated units converges in distribution to a Gaussian. The convergence rates also depend on unit-level confounders. The variance can be estimated directly using the empirical moments of the post-treatment outcomes.

\begin{cor}[Rate of convergence] \label{cor:loadings} Let the conditions in Theorem \ref{thm:horizontal_asym} hold. Then 
$$
\small 
\begin{aligned} 
\frac{\bar{\mathbb{V}}_h(N, T, q_h)}{||v_h^\star||_2^2} =   \Big(\sum_{n \ge N} (\tilde{\Gamma}_n + \gamma_n) q_{h, n}\Big)^\top \Sigma_{\tilde{\Lambda}}  \Big(\sum_{n \ge N} (\tilde{\Gamma}_n + \gamma_n) q_{h, n}\Big) + \sigma_{\varepsilon}^2 ||q_h||_2^2, 
\end{aligned} 
$$ 
where $\Sigma_{\tilde{\Lambda}} = \mathbb{V}(\tilde{\Lambda}_t), \sigma_\varepsilon^2 = \mathbb{V}(\varepsilon_{i,t})$. 
\end{cor}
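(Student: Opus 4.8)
The plan is to expand the post-treatment outcomes with the factor model of Assumption \ref{ass:outcomea}, strip off everything that is measurable with respect to the conditioning $\sigma$-algebra, and compute the conditional variance of the remaining random part using the independence and i.i.d.\ structure in Assumptions \ref{ass:inda}--\ref{ass:iid}. Concretely, for $n \ge N$ and $t \ge T$ write $Y_{n,t} = Y_{n,t}(1) = \tau + (\lambda_t + \tilde{\Lambda}_t)^\top(\gamma_n + \tilde{\Gamma}_n) + \iota_{0,n} + \iota_{1,t} + \varepsilon_{n,t}$ and substitute into $S := \sum_{n \ge N} q_{h,n} \sum_{t \ge T} v_{h,t}^\star Y_{n,t}$. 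Using $\sum_{n \ge N} q_{h,n} = 1$ and $\sum_{t \ge T} v_{h,t}^\star = 1$ (the latter from condition (B) of Assumption \ref{ass:unc1a}), the contributions $\tau$, $\sum_{n} q_{h,n}\iota_{0,n}$ and $\sum_t v_{h,t}^\star \iota_{1,t}$ are constants; setting $g := \sum_{n \ge N} q_{h,n}(\gamma_n + \tilde{\Gamma}_n) \in \mathbb{R}^r$, what remains is $\sum_{t \ge T} v_{h,t}^\star (\lambda_t + \tilde{\Lambda}_t)^\top g + \sum_{n \ge N}\sum_{t \ge T} q_{h,n} v_{h,t}^\star \varepsilon_{n,t}$.

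Next I would condition on $N_0 = N$, $T_0 = T$ and $(\tilde{\Gamma}_i)_{i \ge 1}$. Then $g$ and $\sum_{t \ge T} v_{h,t}^\star \lambda_t^\top g$ are measurable with respect to the conditioning information, so $\bar{\mathbb{V}}_h(N,T,q_h)$ equals the conditional variance of $A + B$, with $A := \sum_{t \ge T} v_{h,t}^\star \tilde{\Lambda}_t^\top g$ and $B := \sum_{n \ge N}\sum_{t \ge T} q_{h,n} v_{h,t}^\star \varepsilon_{n,t}$. By Assumption \ref{ass:inda}, $(\varepsilon_{i,t})$ is independent of $\big[(\tilde{\Lambda}_t)_{t\ge1},(\tilde{\Gamma}_i)_{i\ge1},N_0,T_0\big]$, so $A$ and $B$ are conditionally uncorrelated, $\varepsilon_{n,t}\mid N_0,T_0$ is i.i.d.\ with variance $\sigma_\varepsilon^2$, and hence $\mathbb{V}(B \mid \cdot) = \sum_{n \ge N}\sum_{t \ge T} q_{h,n}^2 (v_{h,t}^\star)^2 \sigma_\varepsilon^2 = \|q_h\|_2^2\,\|v_h^\star\|_2^2\,\sigma_\varepsilon^2$. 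For $A$, Assumption \ref{ass:unc1a}(A) combined with the factor model forces restricted time confounding ($\tilde{\Lambda}_t \perp T_0$), and together with Assumption \ref{ass:inda} this gives that, conditional on $\big(N_0,T_0,(\tilde{\Gamma}_i)_{i\ge1}\big)$, the $(\tilde{\Lambda}_t)_{t\ge T}$ are still i.i.d.\ with variance $\Sigma_{\tilde{\Lambda}}$; therefore $\mathbb{V}(A \mid \cdot) = \sum_{t \ge T}(v_{h,t}^\star)^2\, g^\top \Sigma_{\tilde{\Lambda}}\, g = \|v_h^\star\|_2^2\, g^\top \Sigma_{\tilde{\Lambda}}\, g$. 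Adding the two pieces, dividing by $\|v_h^\star\|_2^2$, and writing $g = \sum_{n \ge N} (\tilde{\Gamma}_n + \gamma_n) q_{h,n}$ gives exactly the stated identity.

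The only delicate step is the claim that conditioning on $T_0 = T$ (and on the unit-side information) leaves $(\tilde{\Lambda}_t)_{t \ge T}$ i.i.d.\ with the \emph{unconditional} variance $\Sigma_{\tilde{\Lambda}}$; this is where the horizontal identification restriction does its work, since Assumption \ref{ass:unc1a}(A) is precisely what keeps the random factors exogenous with respect to the treatment timing, while Assumption \ref{ass:inda} decouples the time process from the unit loadings and the idiosyncratic shocks. Everything else is bookkeeping with the two normalizations $\sum_{n} q_{h,n} = 1$, $\sum_{t} v_{h,t}^\star = 1$ and the orthogonality of i.i.d.\ mean-zero increments.
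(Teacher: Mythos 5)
Your proposal is correct and follows essentially the same route as the paper: the variance formula is obtained in the proof of Theorem \ref{thm:horizontal_asym} by exactly this decomposition of $\sum_{t\ge T} v_{h,t}^\star Y_{n,t}$ into the $\tilde{\Lambda}$-driven term $\Gamma_n^\top \sum_{t\ge T} v_{h,t}^\star \tilde{\Lambda}_t$ and the idiosyncratic term, using Assumptions \ref{ass:inda} and \ref{ass:iid} to drop the cross term and compute each conditional variance. Your explicit remark that Assumption \ref{ass:unc1a}(A) is what preserves the i.i.d.\ law of $(\tilde{\Lambda}_t)_{t\ge T}$ under conditioning on $T_0=T$ is the same (implicit) step the paper relies on.
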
 
Corollary \ref{cor:loadings} shows that the variance (and rate of convergence) depends on the norm of the weights and on the loadings. The rate of convergence is weakly slower than $\frac{1}{\sqrt{T_1 N_1}}$, and equal to $\frac{1}{\sqrt{T_1 N_1}}$ if the \textit{loadings} are exogenous. Specifically, we can write 
$$
\bar{\mathbb{V}}_h(N,T, q_H) = \mathcal{O}\Big(||q_h||^2 || v_h^\star||_2^2 \sigma_\varepsilon^2 + \rho_{N_1}(q_h) ||v_h^\star||_2^2 \Big), \quad \rho_{N_1}(q_h) = \Big| \Big|\sum_{n \ge N} q_{h,n} (\tilde{\Gamma}_n + \gamma_n) \Big| \Big|_2^2. 
$$  
Here, $||v_h^\star||_2^1 = 1/T$, and $\rho_{N_1}$ characterizes the strength of the unit-level confounding, with $\rho_{N_1}$ either bounded away from zero, or $\rho_{N_1} \rightarrow 0$.  For $\rho_{N_1} \rightarrow 0$ to converge to zero, we would need no unit-level confounding, that is $\frac{1}{N} \sum_{n \ge N} \tilde{\Gamma}_n$ concentrates around its unconditional expectation (unconditional on $N_0$) and $\gamma_n$ is local to zero.  This result illustrates properties of convergence rates that depend on the strength of confounding.



The Synthetic Control case follows similarly to the horizontal regression case, where we invert the role of the loadings and factors, with 
$$
\small 
\begin{aligned} 
 \frac{\Big(\hat{\tau}^v(w_v^\star, v_v^\star, \beta_v^\star, q_v) - \tau\Big)}{\bar{\mathbb{V}}_v^{1/2}(N, T, q_v)}  \rightarrow_d \mathcal{N}(0,1)
\end{aligned} 
$$ 
where $\bar{\mathbb{V}}_v(N, T, q_v) = \mathbb{V}\left( \sum_{t \ge T}  q_{v, t}  \sum_{n \ge N} v_{v, n}^\star Y_{n,t} \Big| N_0 = N, T_0 = T, \Big(\tilde{\Lambda}_t\Big)_{t\ge 1}\right)$. Here, 
$$
\begin{aligned} 
\frac{ \bar{\mathbb{V}}_v(N, T, q_v)}{||v_v^\star||_2^2} =  \Big(\sum_{t \ge T} (\tilde{\Lambda}_t + \lambda_t) q_{v, t}\Big)^\top \Sigma_\Gamma  \Big(\sum_{t \ge T} (\tilde{\Lambda}_t + \lambda_t) q_{v, t}\Big) + \sigma_\varepsilon^2 ||q_v||_2^2,  
\end{aligned} 
$$ 
where $\Sigma_{\tilde{\Gamma}} = \mathbb{V}(\tilde{\Gamma}_i)$. With diagonal $\Sigma_{\tilde{\Gamma}}$, the rate of convergence is of order $||v_v^\star||_2^2 ||q_v||_2^2 \sigma_\varepsilon^2 + ||v_v^\star||_2^2 \Big| \Big| \sum_{t \ge T} q_{v,t} (\tilde{\Lambda}_t + \lambda_t)\Big| \Big|_2^2$, which depends on time-level confounders.  
Therefore, whereas each regression provides consistent estimates as $N, T \rightarrow \infty$ under their corresponding unconfoundedness restriction (\textit{either} no high dimensional time or unit level confounders), a \textit{faster} rate of convergence can be obtained if both no high dimensional time \textit{and} unit level confounders hold. 

We conclude this discussion by showing how our results in Theorem \ref{thm:horizontal_asym} apply in the presence of estimated weights.

\begin{cor} \label{lem:aa} Suppose that the conditions in Theorem \ref{thm:horizontal_asym} hold, and consider any data dependent weights $(\hat{w}_h, \hat{v}_h, \hat{\beta}_h)$ such that 
\begin{equation} \label{eqn:error} 
\hat{\tau}_{n,t}^h(\hat{w}_h, \hat{v}_h, \hat{\beta}_h) - \hat{\tau}_{n,t}^h(w_h^\star, v_h^\star, \beta_h^\star) = o_p\left(N_1^{-1/2} T_1^{-1/2}\right).
\end{equation}  
Then Theorem \ref{thm:horizontal_asym} holds with  $\hat{\tau}^h(\hat{w}_h, \hat{v}_h, \hat{\beta}_h)$ in lieu of  $\hat{\tau}^h(w_h^\star, v_h^\star, \beta_h^\star)$. 
\end{cor}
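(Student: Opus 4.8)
\noindent\textbf{Proof strategy for Corollary \ref{lem:aa}.} The plan is to show that switching from the oracle weights to the estimated ones perturbs the studentized statistic by an $o_p(1)$ term, so that the Gaussian limit of Theorem \ref{thm:horizontal_asym} is preserved by Slutsky's theorem. Decompose
$$
\hat{\tau}^h(\hat{w}_h, \hat{v}_h, \hat{\beta}_h, q_h) - \tau = \big(\hat{\tau}^h(w_h^\star, v_h^\star, \beta_h^\star, q_h) - \tau\big) + R, \qquad R := \hat{\tau}^h(\hat{w}_h, \hat{v}_h, \hat{\beta}_h, q_h) - \hat{\tau}^h(w_h^\star, v_h^\star, \beta_h^\star, q_h).
$$
Using the representation $\hat{\tau}^h(w_h, v_h, \beta_h, q_h) = \sum_{n \ge N}\sum_{t \ge T} q_{h,n} v_{h,t}\,\hat{\tau}_{n,t}^h(w_h, v_h, \beta_h)$ implied by \eqref{eqn:estimators2}, together with $\sum_{n\ge N} q_{h,n} = 1$, $\sum_{t \ge T} v_{h,t} = 1$ and the boundedness of $\|q_h\|_1$ and $\|v_h\|_1$, the triangle inequality gives $|R| \le \|q_h\|_1 \|v_h\|_1 \max_{n \ge N, t \ge T}\big|\hat{\tau}_{n,t}^h(\hat{w}_h,\hat{v}_h,\hat{\beta}_h) - \hat{\tau}_{n,t}^h(w_h^\star,v_h^\star,\beta_h^\star)\big|$, so condition \eqref{eqn:error} (understood uniformly over $n,t$) yields $R = o_p(N_1^{-1/2}T_1^{-1/2})$.

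Next I would bound the normalizer from below. By Corollary \ref{cor:loadings}, since the quadratic form involving $\Sigma_{\tilde{\Lambda}}$ is nonnegative,
$$
\bar{\mathbb{V}}_h(N, T, q_h) \;\ge\; \sigma_\varepsilon^2\,\|q_h\|_2^2\,\|v_h^\star\|_2^2 .
$$
Because $q_h$ has $N_1$ entries summing to one and $v_h^\star$ has $T_1$ entries summing to one, Cauchy--Schwarz gives $\|q_h\|_2^2 \ge 1/N_1$ and $\|v_h^\star\|_2^2 \ge 1/T_1$; with $\sigma_\varepsilon^2 > 0$ fixed this yields $\bar{\mathbb{V}}_h(N,T,q_h) \ge \sigma_\varepsilon^2/(N_1 T_1)$, hence $\bar{\mathbb{V}}_h(N,T,q_h)^{-1/2} = \mathcal{O}\big((N_1 T_1)^{1/2}\big)$ (a deterministic bound, even though $\bar{\mathbb{V}}_h$ itself is random through the loadings). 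Combining, $R\,/\,\bar{\mathbb{V}}_h(N,T,q_h)^{1/2} = o_p(N_1^{-1/2}T_1^{-1/2})\cdot\mathcal{O}((N_1T_1)^{1/2}) = o_p(1)$. Writing the studentized statistic with estimated weights as the sum of the one with oracle weights (which converges to $\mathcal{N}(0,1)$ by Theorem \ref{thm:horizontal_asym}) and $R/\bar{\mathbb{V}}_h^{1/2} = o_p(1)$, Slutsky's theorem gives the claim; the vertical-regression analogue follows by the same argument after exchanging the roles of factors and loadings.

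The main obstacle I anticipate is not the Slutsky step but making the aggregation of \eqref{eqn:error} precise: one must argue that the per-unit-and-period approximation error is $o_p(N_1^{-1/2}T_1^{-1/2})$ \emph{uniformly} over the $N_1 T_1$ relevant pairs $(n,t)$, so that the weighted average does not deteriorate by a factor growing with $N_1$ or $T_1$. The second delicate point is the lower bound $\bar{\mathbb{V}}_h \gtrsim (N_1 T_1)^{-1}$: it is what guarantees that an $o_p((N_1T_1)^{-1/2})$ perturbation is genuinely negligible relative to the standard deviation of the oracle estimator, which by Corollary \ref{cor:loadings} may in fact converge strictly slower than $(N_1T_1)^{-1/2}$ when the unit-level confounders do not concentrate at zero. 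Both points are controlled by the $\ell_\infty$/$\ell_2$ restrictions on the weights in Assumption \ref{ass:horizontal_inference} and the nonnegativity of the confounding quadratic form in Corollary \ref{cor:loadings}.
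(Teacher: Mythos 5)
Your proposal is correct and follows essentially the same route as the paper: the same decomposition into the oracle statistic plus a remainder, the same lower bound $\bar{\mathbb{V}}_h(N,T,q_h)\ge \sigma_\varepsilon^2\|q_h\|_2^2\|v_h^\star\|_2^2\ge \sigma_\varepsilon^2/(N_1T_1)$ (which the paper's proof states with an inverted typo), and the same Slutsky conclusion. Your extra care in aggregating the per-$(n,t)$ condition \eqref{eqn:error} uniformly into a bound on the weighted average is a detail the paper leaves implicit, but it does not change the argument.
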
 

\begin{proof}[Proof of Corollary \ref{lem:aa}] See Appendix \ref{proof:lemma_convergence}.  
\end{proof}


An important insight of Corollary \ref{lem:aa} is that even if the variance is conditional on the loadings for the horizontal regression, the estimated weights only need to converge to their population counterpart unconditionally on the factors and loadings. The reason is that the variance $\bar{\mathbb{V}}_h$ converges to zero at a rate at most $1/\sqrt{N_1 T_1}$, while the estimated weights can converge at a faster rate when the pre-treatment period and number of control units is larger than the post-treatment period and number of treated units. Appendix \ref{proof:lemma_convergence} presents the details. Conditions on the convergence rates of the weights are discussed in Section \ref{sec:estimation}, where we review different weights estimators.

 \begin{rem}[Inference with a single treated unit] In Appendix \ref{sec:placebo}, we study inference in the presence of a single treated unit. We show how placebo tests in \cite{abadie2010synthetic} can be used in our setting under additional unconfoundedness restrictions on the endogenous loadings, assuming that we can find a donor pool of ``placebo treated units" such that we can exactly match their loadings with the loadings of the remaining control units.  \qed 
 \end{rem}

\subsection{Robust inference}

In the following lines, we study inference that is robust to confoundedness of either the loadings or factors. We impose the following conditions. 

\begin{ass}[Robust regression conditions] \label{ass:robust_inference} Assume the following: 
\begin{itemize} 
\item[(A)] $T_1 N_1/T_0^{1/3} = o(1), N_1 T_1/N_0^{1/3} = o(1)$; 
\item[(B)] $||w_h^\star||_{\infty} = \mathcal{O}(T_0^{-2/3}), ||w_v^\star||_{\infty} = \mathcal{O}(N_0^{-2/3})$, $||v_h^\star||_{\infty} = \mathcal{O}(T_1^{-2/3}), ||v_v^\star||_{\infty} = \mathcal{O}(N_1^{-2/3})$, and either $(w_h^\star, v_h^\star)$ satisfy (B) in Assumption \ref{ass:unc1a} or $(w_v^\star, v_v^\star)$ satisfy (B) in Assumption \ref{ass:unc2a}. 
\end{itemize} 
\end{ass} 

Assumption \ref{ass:robust_inference} formalizes double robustness properties, for which either no high dimensional unit confounders or no high dimensional time confounders exist.

\begin{thm} \label{thm:dr} Let Assumptions \ref{ass:outcomea}, \ref{ass:inda}, \ref{ass:iid}, \ref{ass:robust_inference} hold, and either Assumption \ref{ass:unc1a} or Assumption \ref{ass:unc2a} (or both) hold. Then as $N_1, T_1 \rightarrow \infty$, 
$$
\small 
\begin{aligned} 
P\left(\frac{\Big(\hat{\tau}^{dr}(w_h^\star, w_v^\star, v_h^\star, v_v^\star) - \tau\Big)}{\max\{\mathbb{V}_h(v_h^\star, v_v^\star), \mathbb{V}_v(v_v^\star, v_h^\star)\}^{1/2}} \le z_{1 -\alpha} \Big| N_0 = N, T_0 = T\right) \ge 1 - \alpha,
\end{aligned} 
$$ 
where $\Phi(z_{1-\alpha}) = 1- \alpha$, $\Phi$ is the Gaussian CDF, and 
$$
\small 
\begin{aligned} 
\mathbb{V}_h(v_h^\star, v_v^\star) & = \mathbb{V}\left(\sum_{t \ge T} \sum_{j \ge N} v_{h,j}^\star v_{v, t}^\star \Big\{ Y_{j,t} -  \sum_{i < N} w_{v, i}^\star Y_{i, t}\Big\}\Big| N_0 = N, T_0 = T, (\tilde{\Gamma}_i)_{i \ge 1}\right), \\  \mathbb{V}_v(v_v^\star, v_h^\star) & = \mathbb{V}\left( \sum_{j \ge N} \sum_{t \ge T} v_{v, t}^\star v_{h,t}^\star \Big\{ Y_{j,t} - \sum_{s < T} w_{v,s}^\star Y_{j, s}\Big\} \Big| N_0 = N, T_0 = T, (\tilde{\Lambda}_t)_{t \ge 1}\right). 
\end{aligned} 
$$ 
\end{thm}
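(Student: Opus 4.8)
The plan is to turn $\hat\tau^{dr}$ into an explicit linear functional of the exogenous innovations and then apply a conditional central limit theorem under whichever of Assumptions \ref{ass:unc1a}, \ref{ass:unc2a} is in force. Substituting $Y_{n,t}=\tau+Y_{n,t}(0)$ on the treated cells (and $Y=Y(0)$ elsewhere) into $\hat\tau^{dr}(w_h^\star,w_v^\star,v_h^\star,v_v^\star)$ and using that all four weight vectors sum to one, the unit and time fixed effects $\iota_{0,\cdot},\iota_{1,\cdot}$ of Assumption \ref{ass:outcomea} cancel out of the double difference while $\tau$ survives with coefficient one; writing $\delta:=\sum_{n\ge N}v_{v,n}^\star(\gamma_n+\tilde\Gamma_n)-\sum_{i<N}w_{v,i}^\star(\gamma_i+\tilde\Gamma_i)$ and $u_a:=\sum_{n\ge N}v_{v,n}^\star\varepsilon_{n,a}-\sum_{i<N}w_{v,i}^\star\varepsilon_{i,a}$ and collecting terms gives
\[
\hat\tau^{dr}-\tau=\delta^\top\Big(\sum_{t\ge T}v_{h,t}^\star(\lambda_t+\tilde\Lambda_t)-\sum_{s<T}w_{h,s}^\star(\lambda_s+\tilde\Lambda_s)\Big)+\sum_{t\ge T}v_{h,t}^\star u_t-\sum_{s<T}w_{h,s}^\star u_s .
\]
This is the sample-level counterpart of the ``product of two differences'' of Proposition \ref{cor:2}, plus idiosyncratic noise, and it admits an exactly symmetric form with loadings and factors interchanged, obtained by regrouping instead through a horizontal residual $Y_{i,t}-\sum_{s<T}w_{h,s}^\star Y_{i,s}$.

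Under Assumption \ref{ass:unc1a} I would condition on $\{N_0=N,T_0=T\}$ and on the loading sequence $(\tilde\Gamma_i)_{i\ge1}$, so $\delta$ is nonrandom. Part (B) of Assumption \ref{ass:unc1a} gives $\sum_{t\ge T}v_{h,t}^\star\lambda_t=\sum_{s<T}w_{h,s}^\star\lambda_s$, which removes the deterministic-factor contribution (this is where (B) enters), and mean-exogeneity gives $\mathbb{E}[\tilde\Lambda_a\mid T_0]=\mathbb{E}[\tilde\Lambda_a]=0$; the identity then reads $\hat\tau^{dr}-\tau=\sum_{t\ge T}v_{h,t}^\star\xi_t-\sum_{s<T}w_{h,s}^\star\xi_s$ with $\xi_a:=\delta^\top\tilde\Lambda_a+u_a$ conditionally i.i.d., mean zero, variance $\sigma_\xi^2=\delta^\top\Sigma_{\tilde\Lambda}\delta+\sigma_\varepsilon^2(\|v_v^\star\|_2^2+\|w_v^\star\|_2^2)\ge\sigma_\varepsilon^2\|v_v^\star\|_2^2>0$. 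The post-treatment piece $S:=\sum_{t\ge T}v_{h,t}^\star\xi_t$ has conditional variance $\|v_h^\star\|_2^2\sigma_\xi^2$, which I identify with $\mathbb{V}_h(v_h^\star,v_v^\star)$; the remainder $R:=-\sum_{s<T}w_{h,s}^\star\xi_s$ is independent of $S$ (disjoint time indices, i.i.d.\ $\xi$), with $\mathbb{V}(R\mid\cdot)/\mathbb{V}(S\mid\cdot)=\|w_h^\star\|_2^2/\|v_h^\star\|_2^2\le(T_0\|w_h^\star\|_\infty^2)\,T_1=O(T_1/T_0^{1/3})=o(1)$ by Assumption \ref{ass:robust_inference}(A)--(B) and $\|v_h^\star\|_2^2\ge T_1^{-1}$ (Cauchy--Schwarz), so $R/\mathbb{V}_h^{1/2}\to_p0$. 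For $S$ the Lyapunov ratio is at most $(\|v_h^\star\|_\infty/\|v_h^\star\|_2)\,(\mathbb{E}|\xi_1|^3/\sigma_\xi^3)$, in which $\|v_h^\star\|_\infty/\|v_h^\star\|_2=O(T_1^{-1/6})\to0$ and the scaled third moment is bounded uniformly by Rosenthal's inequality together with $\mathbb{E}[\varepsilon_{i,t}^3]<\infty$, $\sigma_\varepsilon^2>0$, bounded loadings (Assumption \ref{ass:iid}), and $\|w_v^\star\|_2^2=o(\|v_v^\star\|_2^2)$; hence $S/\mathbb{V}_h^{1/2}\to_d\mathcal N(0,1)$ conditionally. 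By Slutsky and bounded convergence, $(\hat\tau^{dr}-\tau)/\mathbb{V}_h(v_h^\star,v_v^\star)^{1/2}\to_d\mathcal N(0,1)$ (unconditionally on the loadings).

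Under Assumption \ref{ass:unc2a} I would run the identical argument on the symmetric form, now conditioning on $(\tilde\Lambda_t)_{t\ge1}$: part (B) of Assumption \ref{ass:unc2a} matches the deterministic loadings, part (A) forces $\tilde\Gamma_i\perp N_0$ hence $\mathbb{E}[\tilde\Gamma_i\mid N_0]=\mathbb{E}[\tilde\Gamma_i]=0$, the estimator becomes a weighted sum over units of conditionally i.i.d.\ mean-zero terms $\chi_a:=\tilde\Gamma_a^\top(\sum_{t\ge T}v_{h,t}^\star(\lambda_t+\tilde\Lambda_t)-\sum_{s<T}w_{h,s}^\star(\lambda_s+\tilde\Lambda_s))+(\sum_{t\ge T}v_{h,t}^\star\varepsilon_{a,t}-\sum_{s<T}w_{h,s}^\star\varepsilon_{a,s})$, the ``non-treated-unit'' remainder is $o_p$ of the leading term by $N_1T_1/N_0^{1/3}=o(1)$ and the $\ell_\infty$ bounds on $w_v^\star,w_h^\star$, and Lyapunov applies via $\|v_v^\star\|_\infty/\|v_v^\star\|_2=O(N_1^{-1/6})\to0$, giving $(\hat\tau^{dr}-\tau)/\mathbb{V}_v(v_v^\star,v_h^\star)^{1/2}\to_d\mathcal N(0,1)$. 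To conclude, note $\max\{\mathbb{V}_h,\mathbb{V}_v\}\ge\mathbb{V}_h$ and $\ge\mathbb{V}_v$ deterministically; with $c:=(\mathbb{V}_h/\max\{\mathbb{V}_h,\mathbb{V}_v\})^{1/2}\le1$ and $z_{1-\alpha}\ge0$ (i.e.\ $\alpha\le1/2$) one has $\{(\hat\tau^{dr}-\tau)/\mathbb{V}_h^{1/2}\le z_{1-\alpha}\}\subseteq\{(\hat\tau^{dr}-\tau)/\max\{\mathbb{V}_h,\mathbb{V}_v\}^{1/2}\le z_{1-\alpha}\}$, so under Assumption \ref{ass:unc1a} the right-hand probability is $\ge P((\hat\tau^{dr}-\tau)/\mathbb{V}_h^{1/2}\le z_{1-\alpha}\mid N_0=N,T_0=T)\to\Phi(z_{1-\alpha})=1-\alpha$, and symmetrically under Assumption \ref{ass:unc2a} (hence also if both hold).

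The step I expect to be the main obstacle is the middle one: showing that $\mathbb{V}_h(v_h^\star,v_v^\star)$ (resp.\ $\mathbb{V}_v(v_v^\star,v_h^\star)$) is \emph{exactly} the conditional variance of the post-treatment piece $S$ up to a $1+o(1)$ factor, and that the pre-treatment adjustment $R$ — together with the analogous non-treated-unit terms in the symmetric argument and the discarded cross terms — is genuinely $o_p(\mathbb{V}_h^{1/2})$. This requires the $\ell_\infty$ weight bounds of Assumption \ref{ass:robust_inference}(B), the lower bound $\|v_h^\star\|_2^2\gtrsim T_1^{-1}$, the noise floor $\sigma_\varepsilon^2>0$, and the growth restrictions of Assumption \ref{ass:robust_inference}(A) to be combined simultaneously, and it is the part where the precise rates $T_1N_1/T_0^{1/3}=o(1)$ and $N_1T_1/N_0^{1/3}=o(1)$ are indispensable.
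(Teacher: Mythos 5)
Your proposal is correct and follows essentially the same route as the paper's proof: condition on whichever confounder is exogenous under Assumption \ref{ass:unc1a} or \ref{ass:unc2a}, show the pre-treatment and cross terms are $o_p$ of the post-treatment term using the $\ell_\infty$ weight bounds together with $T_1N_1/T_0^{1/3}=o(1)$ (resp.\ $N_1T_1/N_0^{1/3}=o(1)$), apply a Lyapunov CLT to the post-treatment weighted sum, and finish with the worst-case-variance domination $\max\{\mathbb{V}_h,\mathbb{V}_v\}\ge \mathbb{V}_h$ (resp.\ $\mathbb{V}_v$). Your single algebraic reduction to i.i.d.\ increments $\xi_a=\delta^\top\tilde\Lambda_a+u_a$ is a slightly more unified packaging than the paper's term-by-term treatment, and your explicit handling of the one-sided inclusion $\{X\le z_{1-\alpha}\}\subseteq\{cX\le z_{1-\alpha}\}$ for $c\le 1$, $z_{1-\alpha}\ge 0$ makes the final coverage step more transparent, but the substance of the argument is the same.
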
 

\begin{proof}[Proof of Theorem \ref{thm:dr}] 
 See Appendix \ref{proof:long}. 
 \end{proof}
Theorem \ref{thm:dr} shows that confidence intervals for Synthetic DiD depend on the \textit{worst}-\textit{case} variance conditional on either the factors or loadings. The variance calculation differs from settings with non-random factors and loadings  \citep[e.g.][]{arkhangelsky2021synthetic}, and can be computed using the empirical moments of weighted combinations of the outcomes.\footnote{Because the $\max\{\}$ operator is a continuous function, we can use the estimated variances $\mathbb{V}_h, \mathbb{V}_u$ and invoke Slutsky theorem to provide asymptotically valid confidence intervals. }  This difference is because of the population double-robustness property we derived.  

 A direct corollary of Theorem \ref{thm:dr} is that the estimator $\hat{\tau}^{dr}(\cdot)$'s convergence rate depends on both $N_1, T_1$, \textit{and} on the unconfoundedness restriction. The rate of converge is $\max\{\mathbb{V}_h(v_h^\star, v_v^\star), \mathbb{V}_v(v_v^\star, v_h^\star)\}$, namely (let $\wedge$ denote the maximum operator, $\Gamma_i = \gamma_i + \tilde{\Gamma}_i, \Lambda_t = \lambda_t + \tilde{\Lambda}_t$)
\begin{equation} \label{eqn:rhs}
\small 
\begin{aligned}
   ||v_v^\star||_2^2 ||v_h^\star||_2^2 \sigma_\varepsilon^2 + \Big\{&||v_h^\star||_2^2 \Big(\sum_{n \ge N} \Gamma_n  v_{v,n}^\star - \sum_{j < N} \Gamma_j w_{v,j}^\star\Big)^\top \Sigma_{\tilde{\Lambda}} \Big(\sum_{n \ge N} \Gamma_n v_{v,n}^\star - \sum_{j < N} \Gamma_j w_{v,j}^\star\Big), \\ & \quad  \wedge ||v_v^\star||_2^2  \Big(\sum_{t \ge T} v_{h, t}^\star \Lambda_t - \sum_{s <T} \Lambda_s w_{h, s}^\star\Big)^\top \Sigma_{\tilde{\Gamma}}  \Big(\sum_{t \ge T} v_{h, t}^\star \Lambda_t - \sum_{s <T} \Lambda_s w_{h, s}^\star\Big)\Big\}. 
\end{aligned} 
\end{equation}
If \textit{both} $\Lambda_t \perp T_0, \Gamma_i \perp T_0$, the right-hand side of Equation \eqref{eqn:rhs} converges almost surely to zero, with rate $1/\sqrt{N_1 T_1}$. On the other hand, under lack of either unconfoundedness restriction (over time or units), the right-hand-side expression in Equation \eqref{eqn:rhs} does not converge to zero and the rate of convergence is $\max\{T_1^{-1/2}, N_1^{-1/2}\}$. Convergence rates faster than $1/\sqrt{T_1}$ or $1/\sqrt{N_1}$ do not require that factors or loadings are exogenous. Convergence rates of order $1/\sqrt{N_1 T_1}$ allow the conditional expectations of the loadings and factors to be different from zero but require that the expectations match before and after the treatment \textit{after reweighting}.

\begin{table}[!h] \centering 
  \caption[Caption for LOF]{Convergence rates under the model in Assumption \ref{ass:outcomea} with independent factors and loadings. The convergence rate for Horizontal and Vertical regression depends on the post-treatment weights $v^\star$ and on the concentration of the loadings or factors. The convergence rate of Synthetic DiD depends on the post-treatment weights and the \textit{mismatch} between the post-treatment and pre-treatment factors and loadings.} 
  \label{tab:methods2} 
  \scalebox{0.66}{
\begin{tabular}{@{\extracolsep{5pt}} cccc} 
\\[-1.8ex]\hline 
\hline \\[-1.8ex] 
Regression &  Bound on Standard Error  \\ 
\hline \\[-1.8ex] 
Horizontal
 & $ ||v_h^\star||_2 \Big| \Big| \sum_{n \ge N} q_{h,n} (\tilde{\Gamma}_n + \gamma_n) \Big| \Big|_2$ for arbitrary $q_h: 1^\top q_h = 1$   \\ 
 &  & \\
& &  
 \\
Vertical
 &  $ ||v_v^\star||_2 \Big| \Big| \sum_{t \ge T} q_{v,n} (\tilde{\Lambda}_t + \lambda_t) \Big| \Big|_2$ for arbitrary $q_v: 1^\top q_v = 1$   \\ 
 &  & \\
 & & \\

Synthetic DiD
 & $  ||v_h^\star||_2 \Big|\Big|\sum_{n \ge N} (\tilde{\Gamma}_n + \gamma_n) v_{v,n}^\star - \sum_{j < N} (\tilde{\Gamma}_j + \gamma_j) w_{v,j}^\star\Big|\Big|_2 \wedge  ||v_v^\star||_2 \Big|\Big| \sum_{t \ge T} v_{h, t}^\star (\tilde{\Lambda}_t + \lambda_t) - \sum_{s <T} (\tilde{\Lambda}_s + \lambda_s) w_{h, s}^\star\Big|\Big|_2$ \\ 
 &  & \\
 & & \\

\hline \\[-1.8ex] 
\end{tabular} 
}
\end{table}

\section{Estimation of the weights and factors: a discussion} \label{sec:estimation}

In this section, we review existing methods for estimating the weights and discuss properties and assumptions that these methods require in the context of our model of confounding.   




\subsection{Weights estimation with penalized $l_2$-norm minimization} \label{sec:squared_loss}

First, we review estimation of the weights as in \cite{arkhangelsky2021synthetic}, which are similar to those in \cite{abadie2010synthetic} with the additional norm constraint. For the horizontal regression, we can construct weights' estimators as follows: 
\begin{equation} \label{eqn:est1}
\small 
\begin{aligned} 
(\hat{\beta}_h, \hat{w}_h, \hat{v}_h) & = \mathrm{arg}\min_{w,  v, \beta, 1^\top w  = 1, w \ge 0, ||w||_{\infty} \le T_0^{-2/3}, 1^\top v = 1, v \ge 0, ||v||_{\infty} \le T_1^{-2/3} } \mathcal{L}_{h, N, T}(w, v, \beta)  , \\
\mathcal{L}_{h, N, T}(w, v, \beta) & =  \sum_{i < N} \Big( \sum_{t \ge T} v_{h, t} Y_{i,t} - \sum_{s < T} w_s Y_{i,s} - \beta\Big)^2 + p_h N  (||w||^2 + ||v||^2) 
\end{aligned} 
\end{equation} 
where we minimize the $l_2$-distance between the post-treatment average control outcomes and a weighted combination of the pre-treatment outcomes. 
Similarly, for the vertical regression  
\begin{equation} \label{eqn:est2} 
\small 
\begin{aligned} 
(\hat{\beta}_v, \hat{w}_v, \hat{v}_v) & = \mathrm{arg}\min_{w,  v, \beta, 1^\top w  = 1, w \ge 0, ||w||_{\infty} \le N_0^{-2/3}, 1^\top v = 1, v \ge 0, ||v||_{\infty} \le N_1^{-2/3} }\mathcal{L}_{v, N,T}(w', v', \beta') \\ 
\mathcal{L}_{v, N,T}(w', v', \beta') & = \sum_{s < T} \Big( \sum_{i \ge N} v_{v, i}' Y_{i,s} - \sum_{i < N} w_i' Y_{i,s} - \beta'\Big)^2 +  p_v T (||w'||^2 + ||v'||^2) .
\end{aligned}
\end{equation}   Here, $p_v, p_h$ denote choosen penalty parameters discussed in \cite{arkhangelsky2021synthetic}. 

The constraints are as discussed in Sections \ref{sec:identification}, \ref{sec:inference_main}.  Algorithm \ref{alg:2} presents a summary. 

  \begin{algorithm} [!h]   \caption{Vertical and Horizontal Regressions}\label{alg:2}
    \begin{algorithmic}[1]
    \Require $Y_{i,t}, i < N_0, t \in \{1, \cdots, T\}$
    \State Minimize $\mathcal{L}_{h, N_0,T_t}(w, v, \beta), \mathcal{L}_{v, N_0,T_0}(w', v', \beta')$ under the constraints  
    $$
    \small 
    \begin{aligned}
     \text{ subject to }  & 1^\top w = 1,  ||w||_{\infty} \le K N_0^{-2/3}, w \ge 0, 1^\top v = 1, ||v||_{\infty} \le K N_1^{-2/3}, v \ge 0
 \\  
& 1^\top w'  = 1, ||w'||_{\infty} \le K T_0^{-2/3}, w' \ge 0, 1^\top v = 1, ||v'||_{\infty} \le K T_1^{-2/3}, v' \ge 0
    \end{aligned}
    $$ 
    \Return $(\hat{w}_h, \hat{w}_v, \hat{v}_h, \hat{v}_v)$. 
         \end{algorithmic}
\end{algorithm}

The minimization problem in Section \ref{sec:squared_loss} corresponds to a error-in-variables optimizaton problem. By letting $\Gamma_i = \gamma_i + \tilde{\Gamma}_i$ 
\begin{equation} \label{eqn:error_in_model} 
\small 
\begin{aligned} 
& \sum_{i < N_0}   \left(\sum_{t \ge T_0} v_{h,t} Y_{i,t} - \sum_{s < T} w_{h,t} Y_{i,s}\right)^2 + (\sigma^2_h - 1) \zeta_h N_0  (||w||^2 + ||v||^2)  = \\ & 
\sum_{i < N_0}  \left(\sum_{t \ge T_0} v_{h,t} \Big( \lambda_t^\top \Gamma_i + \eta_{i,t} + \iota_{1, t}\Big) - \sum_{s < T_0} w_{h,s} \Big(\lambda_s^\top \Gamma_i +  \eta_{i,s} + \iota_{1,s} + \beta\Big)\right)^2 \\ & \quad \quad \quad \quad +   (\sigma^2_h - 1) \zeta_h N_0 (||w||^2 + ||v||^2) , 
\end{aligned} 
\end{equation} 
where 
\begin{equation}
\eta_{i,t}  = \tilde{\Lambda}_t^\top \Gamma_i + \varepsilon_{i,t}, \quad \Gamma_i = \gamma_i + \tilde{\Gamma}_i
\end{equation}

Following \cite{arkhangelsky2021synthetic}, we define ``oracle" counterpart of such a problem for the horizontal regression solves the following optimization problem.  
\begin{equation} \label{eqn:obj_oracles} 
\small 
\begin{aligned} 
& (w_h^\star, v_h^\star,  \beta_h^\star) = \mathrm{arg} \min_{w,  v, \beta, 1^\top w  = 1, w \ge 0, ||w||_{\infty} \le T_0^{-2/3}, 1^\top v = 1, v \ge 0, ||v||_{\infty} \le T_1^{-2/3} } \mathcal{L}_h^\star(w,  v, \beta, \delta_h),  \\
& \mathcal{L}_h^\star(\cdot) = \sum_{i < N_0} \delta_{h, i} \left(\Gamma_i^\top\Big(\sum_{t \ge T_0} v_t \lambda_t - \sum_{s < T_0} w_s \lambda_s \Big) + \sum_{t \ge T_0} v_{t} \iota_{1, t} - \sum_{s < T_0} w_{s} \iota_{1, s} - \beta\right)^2 +  N_0 p_h^\star  (||w||^2 + ||v||^2).
\end{aligned} 
\end{equation} 
where $p_h^\star$ is a penalty that depends on the residuals' variance as in \cite{arkhangelsky2021synthetic}. The penalization of the oracle solution is different from the penalization used for estimation as motivated in \cite{hirshberg2021least}.  
We can define oracle weights for Synthetic Control $(w_v^\star, v_v^\star, \beta_v^\star)$ in a similar manner.

Consider estimating the horizontal regression weights, and 
let $||\cdot||_{op}$ be the operator norm for a matrix,\footnote{For a matrix $A$ the operator norm is defined as $\sup_{u, ||u|| = 1} ||A u||$.} and 
$$
\begin{aligned} 
\Sigma_\eta & = \frac{1}{T_0 + T_1} \mathbb{E}\Big[\tilde{\eta} \tilde{\eta}^\top \Big| (\tilde{\Gamma}_i)_{i \ge 1}, N_0 = N, T_0  = T\Big], \quad 
\mu^2 = ||\Sigma_\eta||_{op}. 
\end{aligned} 
$$ 
Under Assumptions  \ref{ass:outcomea}, \ref{ass:inda}, \ref{ass:iid}, \ref{ass:unc1a}
\begin{equation} \label{eqn:matrix1} 
\begin{aligned} 
\Sigma_\eta^{(i,j)} 
= \begin{cases} 
 \Big(\tilde{\Gamma}_i + \gamma_i\Big)^\top \mathrm{Var}(\tilde{\Lambda}_t) \Big(\tilde{\Gamma}_i + \gamma_i\Big) + \mathrm{Var}(\varepsilon_{i,t}), \quad & i = j \\ 
 \Big(\tilde{\Gamma}_i +\gamma_i\Big)^\top \mathrm{Var}(\tilde{\Lambda}_t) \Big(\tilde{\Gamma}_j + \gamma_j\Big) \quad & \text{otherwise}. 
\end{cases} 
\end{aligned} 
\end{equation} 

It follows that conditional the treated units $N_0 = N$, $\mu^2$ measures the amount of endogeneity of $\tilde{\Gamma}_i$. Specifically, let  
$$
\small 
\begin{aligned} 
  A & \in \mathbb{R}^{ (N_0 - 1) \times (T_0 + T_1) }, \quad A_{i,t} =  
  \begin{cases}
 \lambda_t^\top \Gamma_i + \iota_{0, t} & \text{ if } t  < T_0 \\ 
  - \lambda_t^\top \Gamma_i - \iota_{0,t} & \text{ otherwise}. 
  \end{cases}, \\ \tilde{\eta} & \in \mathbb{R}^{(N_0  - 1) \times (T_0 + T_1)}, \quad  \tilde{\eta}_{i,t} = \begin{cases} 
   \eta_{i,t}, & \text{ if } t < T_0 \\ 
   - \eta_{i,t} , & \text{ otherwise}.   
\end{cases}  
\end{aligned} 
$$ 
Here the matrix $A$ depends on the endogenous factors; $\tilde{\eta}$ is the noise matrix for the \textit{control} units. Define 
$$
\theta_h = (w_h, v_h)  \text{ and } \theta_h^\star = (w_h^\star, v_h^\star).
$$


In Appendix \ref{app:convergence}, using rate-properties of error in variable models in \cite{hirshberg2021least} applied to our model of confounding, we show that the convergence rate depends on three main components: 
$$
  \frac{\mathrm{rank}(A) \mu^2 ||\theta^\star||_2^2}{N_0},  \quad \mu^2 ||\theta^\star||_2 \log(T_0 + T_1) N_0^{-1/2}, \quad \frac{\mu^2 \log(T_0 + T_1)}{N_0}. 
$$ 
Importantly, here, the rank of the matrix $A$ only depends on the non-zero number of endogenous factors $\lambda_t$, i.e., $||\lambda_t||_0$. In the presence of few endogenous factors, the rank of $A$ is small, even in settings where $\tilde{\Lambda}_t$ is high-dimensional. The component $\mu^2$ depends on the degree of endogeneity of the \textit{loadings} (unit-level confounders). The rate of convergence of $\mu^2$ is of order $\sqrt{N_0}$ from standard properties of matrix concentration inequalities \citep{van2017spectral} for exogenous $\tilde{\Gamma}_i$ and slower otherwise. The component $||\theta_h^\star||$ instead converges to zero as $T_1 \rightarrow \infty$. The error does not necessarily converge to zero for \textit{arbitrarly} endogenous loadings $\tilde{\Gamma}_i$. It converges to zero under restrictions on $\mu^2$ (e.g., only few loadings $\tilde{\Gamma}_i$ are endogenous and the remaining ones are exogenous). The same results holds for vertical weights once we exchange the role of the factors and loadings.

This result illustrates the benefits of the Synthetic control methods in the presence of high-rank factor models.  

\begin{rem}[Balancing]
We can gain further intuition if we interpret the weights' estimator in Section \ref{sec:squared_loss} as the dual of a balancing problem. For given constraint $\nu = o(1)$ (as a function of $N, T$) we can formulate the dual of Equation \eqref{eqn:est1} as 
$$
\small 
\begin{aligned} 
\min_{w_h, v_h, \beta} ||w_h||^2 + ||v_h||_2^2, \quad \text{ s.t. } & \underbrace{\frac{1}{N} \sum_{i < N} \Big(\sum_{t \ge T_0} v_{h,t} Y_{i,t} - \sum_{s > T} w_{h,s} Y_{i,s} + \beta\Big)^2}_{:=\frac{1}{N} ||(A + \tilde{\eta}) \theta_h + \beta||_2^2}  \le \nu, \quad (w_h, v_h) \ge 0, \\ & ||w_h||_{\infty} \le T_0^{-2/3}, ||v_h||_{\infty} \le T_1^{-2/3}, 1^\top w_h = 1^\top v_h = 1. 
\end{aligned} 
$$ 
The dual formulation clarifies the role of the constraint on the weights $w_h$: The constraint on the weights norm guarantees the variance of the error component $\tilde{\eta} \theta_h$ converges to zero (its variance is of order $||\theta_h||^2$). Intuitively, the Synthetic Control method averages over the component $\tilde{\eta}$ to obtain consistent weight's estimators without estimating the factors directly.\footnote{It is possible to consider alternative balancing estimators. For example, we might balance the means of treated and control units by imposing 
$$
\Big|\frac{1}{N} \sum_{i < N} \sum_{t \ge T} v_{h,t} Y_{i,t} - \frac{1}{N} \sum_{i < N} \sum_{s > T} w_{h,s} Y_{i,s} \Big | \le \nu', 
$$
for some constraint $\nu' = o(1)$. Imposing such balancing restriction assumes that we can find a set of weights $v_h^\star, v_h^\star$ such that $\sum_{t < T} v_{h,t}^\star \lambda_t = \sum_{t \ge T} w_{h,t}^\star \lambda_t$ \textit{and} $\sum_{t < T} v_{h,t}^\star \iota_{1,t} = \sum_{t \ge T} w_{h,t}^\star \iota_{1, t}$, i.e., we can also match fixed effects before and after the treatment.} \qed 
\end{rem}

\subsection{Proximal methods for weights estimation}

It is possible to use proximal methods to estimate the weights in the spirit of \cite{shi2021theory}, \cite{imbens2021controlling}.
Consider estimating synthetic control weights first. Suppose we can find variables $Z_t$ independent of $\varepsilon_{i,t}$ such that for all $w_{v}, v_v$
\begin{equation} \label{eqn:proximal}
\mathbb{E}\Big[Z_t \Big(\sum_{j < N} w_{v,j} \tilde{\Gamma}_j^\top - \sum_{i \ge N} v_{v, i} \tilde{\Gamma}_i\Big)^\top \tilde{\Lambda}_t \Big| N_0 = N, T_0 = T, \tilde{\Lambda}_t\Big] = 0, \quad \forall t \le T_0. 
\end{equation} 
Assuming that the number of such variables $Z_t$ is sufficiently larger than the number of  parameters (weights) to be estimated, we can use such moment restrictions to estimate the weights. The main advantage of the proximal method is that it does not impose restrictions on the distribution of the endogenous factors $\tilde{\Lambda}_t$. However, it requires finding a set of proximal variables that satisfy Equation \eqref{eqn:proximal}. A similar approach follows for horizontal weights, where we should find a (recenter) instrument $X_i$ such that 
$$
\mathbb{E}\Big[X_i \Big(\sum_{t < T} w_{h,t} \tilde{\Lambda}_t^\top - \sum_{s \ge T} v_{h, s} \tilde{\Lambda}_s\Big)^\top \tilde{\Gamma}_i \Big| N_0 = N, T_0 = T, \tilde{\Gamma}_i\Big] = 0, \quad \forall i \le N_0. 
$$

\subsection{Factor models estimators} \label{sec:factors}

We conclude this section with a discussion on factor models. We write our model as (for $\Gamma_i = \tilde{\Gamma}_i + \gamma_i$)
 \begin{equation} \label{eqn:model_working}
 Y_{i,t}(0) = \lambda_t^\top \Gamma_i + \underbrace{\tilde{\Lambda}_t^\top \Gamma_i  + \varepsilon_{i,t}}_{= \eta_{i,t}}, 
\end{equation}  
where $\lambda_t^\top \Gamma_i$ is low rank when $||\lambda_t||_0$ is finite, and $\tilde{\Lambda}_t^\top \Gamma_i$ can be an high rank component.

Our discussion above suggests that horizontal regression (and similarly vertical regression after exchanging the factors with the loadings)   leverage the assumption that endogenous factors $\lambda_t$ are low dimensional (e.g., $||\lambda_t||_0$ is uniformly bounded).

It is interesting to contrast the convergence rate of the weights of Synthetic controls (described in Section \ref{sec:squared_loss} and formally presented in Proposition \ref{proof:estimation_error}) with those that we would obtain using standard methods for estimating factor models. Two main differences from properties of least squares estimators as in
\cite{bai2009panel} is that Synthetic Control does not require a (i) low-rank assumption on the factors, but only a low-rank assumption on the \textit{endogenous} factors; (ii) it does not require to specify or estimate the number of (endogenous) factors. 
For example, if we consider a factor model where the factors are aggregate sectoral shocks and the loadings are sectoral weights, a sparsity assumption on loadings implies that the policy implementation may only depend on a subset of sectoral weights.

More closely related to the results in Proposition \ref{proof:estimation_error},  \cite{moon2017dynamic} show that conditions for least squares factor model can be expressed as a function of the operator norm ($||\eta||_{op}$). The main distinction from Proposition \ref{proof:estimation_error}, however, is that the number of factors (the degree of sparsity of $\lambda_t$ in our framework) must be known for \cite{moon2017dynamic}'s results to hold. With an unknown number of factors (as in the case of Proposition \ref{proof:estimation_error}), stronger conditions on the error term $\eta$, such as rank restrictions, are imposed \citep[][]{moon2015linear}. 

The model in Equation \eqref{eqn:model_working} also connects to the approximate factor model discussed in \cite{chamberlain1982arbitrage}. The main distinction, however, is that low dimensional factor structure $\lambda_t^\top \gamma_i$ cannot be necessarily separated by the (exogenous and high dimensional) structure 
$\tilde{\Gamma}_i^\top \tilde{\Lambda}_t$, different from \cite{chamberlain1982arbitrage}.  
This is an important distinction also from work in the literature of causal inference with panel data as in \cite{athey2021matrix} (see, e.g., Section 8.2). 

Interestingly, Equation \eqref{eqn:model_working} may justify alternative estimators \textit{if} additional conditions are imposed on the loadings and factors. For example, under low rank $\lambda_t^\top \Gamma_i$ and sparse  $\tilde{\Lambda}_t^\top \Gamma_i$ we could use estimators in \cite{candes2011robust}, and for bounded eigenvalues of $\tilde{\Lambda}_t^\top \Gamma_i$ as in \cite{bai2019rank}. 

Finally, 
Table \ref{tab:1} presents numerical studies that support the benefit of Synthetic control methods over least square regression.

\begin{table}[!h] \centering 
  \caption{Illustrative example where we report the mean-squared error using data from CPS on state earnings \citep{arkhangelsky2021synthetic}, after removing the treated units. Here, $T_0 = 30, T_1 = 10, N= 42$. We simulate an environment where either two or three units (originally under control in the original study) are exposed to treatment, with treatment effects equal to zero. We compute the root-mean-squared error of the estimated average treatment effect for each estimator, after averaging the mean-squared error averaged over all pairs or triads of placebo treated units in the sample. Here, vertical denotes a vertical regression and similarly horizontal for the horizontal regression. PCA denotes a regression where we first estimate the principal components via PCA using all control units. We then run a regression for each placebo-treated unit on the principal components and compute the counterfactuals. The weights for Synthetic DiD, vertical and horizontal regression are all computed as in Section \ref{sec:estimation}, with a small penalty $\lambda = 0.01$ to guarantee stability, a constraint that sum to one, and constraints on the $l_{\infty}$-norm as in Section \ref{sec:estimation}. For PCA, the number of factors is computed using the BIC.  }
  \label{tab:1} 
\begin{tabular}{@{\extracolsep{5pt}} ccc} 
\\[-1.8ex]\hline 
Root-mean-squared-error & $N_1 = 3$ & $N_1 = 2$ \\ 
\hline \\[-1.8ex] 
Synthetic DiD & $0.020$ & $0.025$ \\ 
Vertical & $0.020$ & $0.025$ \\ 
Horizontal & $0.026$ & $0.031$ \\ 
PCA & $0.674$ & $0.674$ \\ 
\hline \\[-1.8ex] 
\end{tabular} 
\end{table}

\section{Conclusion}

This paper studies inference on treatment effects in panel data settings in the presence of confounding. We model confounding through unobserved factors -- which might affect when the treatment occurs -- and unobserved loadings -- which might affect which units receive the treatment. 
We illustrate the existence of a trade-off between assuming no (high dimensional) confounding across units or time and introducing notions of double robustness in this setting. We relate notions of confounding to the source of randomness for confidence intervals.

This paper opens new questions on trade-offs between the choice of the estimator and robustness to confounding. Different sources of confounding justify different estimators, including Synthetic DiD, Synthetic Control, factor models, or proxy variable methods. A comprehensive comparison of such estimators remains an open question.  
Future research should also study trade-offs between weak factors and the choice of the estimator. Synthetic control methods implicitly leverage the low-rank representation of the confounders, whereas the least-squares method estimates all such confounders. Finally, a further avenue for future avenue of research is to study augmented inverse probability weight estimators in the spirit of  \cite{robins1994estimation} in contexts with synthetic controls and high dimensional factor models.

 \bibliography{my_bib2}

\begin{thebibliography}{}

\bibitem[\protect\citeauthoryear{Abadie, Diamond, and Hainmueller}{Abadie
  et~al.}{2010}]{abadie2010synthetic}
Abadie, A., A.~Diamond, and J.~Hainmueller (2010).
\newblock Synthetic control methods for comparative case studies: Estimating
  the effect of california’s tobacco control program.
\newblock {\em Journal of the American statistical Association\/}~{\em
  105\/}(490), 493--505.

\bibitem[\protect\citeauthoryear{Abadie and Gardeazabal}{Abadie and
  Gardeazabal}{2003}]{abadie2003economic}
Abadie, A. and J.~Gardeazabal (2003).
\newblock The economic costs of conflict: A case study of the basque country.
\newblock {\em American economic review\/}~{\em 93\/}(1), 113--132.

\bibitem[\protect\citeauthoryear{Abadie and Vives-i Bastida}{Abadie and Vives-i
  Bastida}{2022}]{abadie2022synthetic}
Abadie, A. and J.~Vives-i Bastida (2022).
\newblock Synthetic controls in action.
\newblock {\em arXiv preprint arXiv:2203.06279\/}.

\bibitem[\protect\citeauthoryear{Agarwal and Syrgkanis}{Agarwal and
  Syrgkanis}{2022}]{agarwal2022synthetic}
Agarwal, A. and V.~Syrgkanis (2022).
\newblock Synthetic blip effects: Generalizing synthetic controls for the
  dynamic treatment regime.
\newblock {\em arXiv preprint arXiv:2210.11003\/}.

\bibitem[\protect\citeauthoryear{Arkhangelsky, Athey, Hirshberg, Imbens, and
  Wager}{Arkhangelsky et~al.}{2021}]{arkhangelsky2021synthetic}
Arkhangelsky, D., S.~Athey, D.~A. Hirshberg, G.~W. Imbens, and S.~Wager (2021).
\newblock Synthetic difference-in-differences.
\newblock {\em American Economic Review\/}~{\em 111\/}(12), 4088--4118.

\bibitem[\protect\citeauthoryear{Arkhangelsky and Hirshberg}{Arkhangelsky and
  Hirshberg}{2023}]{arkhangelsky2023large}
Arkhangelsky, D. and D.~Hirshberg (2023).
\newblock Large-sample properties of the synthetic control method under
  selection on unobservables.
\newblock {\em arXiv preprint arXiv:2311.13575\/}.

\bibitem[\protect\citeauthoryear{Arkhangelsky and Imbens}{Arkhangelsky and
  Imbens}{2018}]{arkhangelsky2018role}
Arkhangelsky, D. and G.~Imbens (2018).
\newblock The role of the propensity score in fixed effect models.
\newblock Technical report, National Bureau of Economic Research.

\bibitem[\protect\citeauthoryear{Athey, Bayati, Doudchenko, Imbens, and
  Khosravi}{Athey et~al.}{2021}]{athey2021matrix}
Athey, S., M.~Bayati, N.~Doudchenko, G.~Imbens, and K.~Khosravi (2021).
\newblock Matrix completion methods for causal panel data models.
\newblock {\em Journal of the American Statistical Association\/}~{\em
  116\/}(536), 1716--1730.

\bibitem[\protect\citeauthoryear{Bai}{Bai}{2003}]{bai2003inferential}
Bai, J. (2003).
\newblock Inferential theory for factor models of large dimensions.
\newblock {\em Econometrica\/}~{\em 71\/}(1), 135--171.

\bibitem[\protect\citeauthoryear{Bai}{Bai}{2009}]{bai2009panel}
Bai, J. (2009).
\newblock Panel data models with interactive fixed effects.
\newblock {\em Econometrica\/}~{\em 77\/}(4), 1229--1279.

\bibitem[\protect\citeauthoryear{Bai and Ng}{Bai and Ng}{2019}]{bai2019rank}
Bai, J. and S.~Ng (2019).
\newblock Rank regularized estimation of approximate factor models.
\newblock {\em Journal of Econometrics\/}~{\em 212\/}(1), 78--96.

\bibitem[\protect\citeauthoryear{Ben-Michael, Feller, and
  Rothstein}{Ben-Michael et~al.}{2021}]{ben2021augmented}
Ben-Michael, E., A.~Feller, and J.~Rothstein (2021).
\newblock The augmented synthetic control method.
\newblock {\em Journal of the American Statistical Association\/}~{\em
  116\/}(536), 1789--1803.

\bibitem[\protect\citeauthoryear{Bottmer, Imbens, Spiess, and Warnick}{Bottmer
  et~al.}{2021}]{bottmer2021design}
Bottmer, L., G.~Imbens, J.~Spiess, and M.~Warnick (2021).
\newblock A design-based perspective on synthetic control methods.
\newblock {\em arXiv preprint arXiv:2101.09398\/}.

\bibitem[\protect\citeauthoryear{Canay, Romano, and Shaikh}{Canay
  et~al.}{2017}]{canay2017randomization}
Canay, I.~A., J.~P. Romano, and A.~M. Shaikh (2017).
\newblock Randomization tests under an approximate symmetry assumption.
\newblock {\em Econometrica\/}~{\em 85\/}(3), 1013--1030.

\bibitem[\protect\citeauthoryear{Cand{\`e}s, Li, Ma, and Wright}{Cand{\`e}s
  et~al.}{2011}]{candes2011robust}
Cand{\`e}s, E.~J., X.~Li, Y.~Ma, and J.~Wright (2011).
\newblock Robust principal component analysis?
\newblock {\em Journal of the ACM (JACM)\/}~{\em 58\/}(3), 1--37.

\bibitem[\protect\citeauthoryear{Cattaneo, Feng, and Titiunik}{Cattaneo
  et~al.}{2021}]{cattaneo2021prediction}
Cattaneo, M.~D., Y.~Feng, and R.~Titiunik (2021).
\newblock Prediction intervals for synthetic control methods.
\newblock {\em Journal of the American Statistical Association\/}~{\em
  116\/}(536), 1865--1880.

\bibitem[\protect\citeauthoryear{Chamberlain and Rothschild}{Chamberlain and
  Rothschild}{1982}]{chamberlain1982arbitrage}
Chamberlain, G. and M.~Rothschild (1982).
\newblock Arbitrage, factor structure, and mean-variance analysis on large
  asset markets.

\bibitem[\protect\citeauthoryear{Chernozhukov, Wuthrich, and Zhu}{Chernozhukov
  et~al.}{2019}]{chernozhukov2019practical}
Chernozhukov, V., K.~Wuthrich, and Y.~Zhu (2019).
\newblock Practical and robust t-test based inference for synthetic control and
  related methods.
\newblock {\em arXiv preprint arXiv:1812.10820\/}.

\bibitem[\protect\citeauthoryear{Chernozhukov, W{\"u}thrich, and
  Zhu}{Chernozhukov et~al.}{2021}]{chernozhukov2021exact}
Chernozhukov, V., K.~W{\"u}thrich, and Y.~Zhu (2021).
\newblock An exact and robust conformal inference method for counterfactual and
  synthetic controls.
\newblock {\em Journal of the American Statistical Association\/}~{\em
  116\/}(536), 1849--1864.

\bibitem[\protect\citeauthoryear{Doudchenko and Imbens}{Doudchenko and
  Imbens}{2016}]{doudchenko2016balancing}
Doudchenko, N. and G.~W. Imbens (2016).
\newblock Balancing, regression, difference-in-differences and synthetic
  control methods: A synthesis.
\newblock Technical report, National Bureau of Economic Research.

\bibitem[\protect\citeauthoryear{Farrell}{Farrell}{2015}]{farrell2015robust}
Farrell, M.~H. (2015).
\newblock Robust inference on average treatment effects with possibly more
  covariates than observations.
\newblock {\em Journal of Econometrics\/}~{\em 189\/}(1), 1--23.

\bibitem[\protect\citeauthoryear{Ferman}{Ferman}{2021}]{ferman2021properties}
Ferman, B. (2021).
\newblock On the properties of the synthetic control estimator with many
  periods and many controls.
\newblock {\em Journal of the American Statistical Association\/}~{\em
  116\/}(536), 1764--1772.

\bibitem[\protect\citeauthoryear{Ferman and Pinto}{Ferman and
  Pinto}{2016}]{ferman2016revisiting}
Ferman, B. and C.~Pinto (2016).
\newblock Revisiting the synthetic control estimator.

\bibitem[\protect\citeauthoryear{Gobillon and Magnac}{Gobillon and
  Magnac}{2016}]{gobillon2016regional}
Gobillon, L. and T.~Magnac (2016).
\newblock Regional policy evaluation: Interactive fixed effects and synthetic
  controls.
\newblock {\em Review of Economics and Statistics\/}~{\em 98\/}(3), 535--551.

\bibitem[\protect\citeauthoryear{Hahn and Shi}{Hahn and
  Shi}{2017}]{hahn2017synthetic}
Hahn, J. and R.~Shi (2017).
\newblock Synthetic control and inference.
\newblock {\em Econometrics\/}~{\em 5\/}(4), 52.

\bibitem[\protect\citeauthoryear{Hirshberg}{Hirshberg}{2021}]{hirshberg2021least}
Hirshberg, D.~A. (2021).
\newblock Least squares with error in variables.
\newblock {\em arXiv preprint arXiv:2104.08931\/}.

\bibitem[\protect\citeauthoryear{Imai and Kim}{Imai and
  Kim}{2021}]{imai2021use}
Imai, K. and I.~S. Kim (2021).
\newblock On the use of two-way fixed effects regression models for causal
  inference with panel data.
\newblock {\em Political Analysis\/}~{\em 29\/}(3), 405--415.

\bibitem[\protect\citeauthoryear{Imbens, Kallus, and Mao}{Imbens
  et~al.}{2021}]{imbens2021controlling}
Imbens, G., N.~Kallus, and X.~Mao (2021).
\newblock Controlling for unmeasured confounding in panel data using minimal
  bridge functions: From two-way fixed effects to factor models.
\newblock {\em arXiv preprint arXiv:2108.03849\/}.

\bibitem[\protect\citeauthoryear{Kellogg, Mogstad, Pouliot, and
  Torgovitsky}{Kellogg et~al.}{2021}]{kellogg2021combining}
Kellogg, M., M.~Mogstad, G.~A. Pouliot, and A.~Torgovitsky (2021).
\newblock Combining matching and synthetic control to tradeoff biases from
  extrapolation and interpolation.
\newblock {\em Journal of the American statistical association\/}~{\em
  116\/}(536), 1804--1816.

\bibitem[\protect\citeauthoryear{Liu, Wang, and Xu}{Liu
  et~al.}{2022}]{liu2022practical}
Liu, L., Y.~Wang, and Y.~Xu (2022).
\newblock A practical guide to counterfactual estimators for causal inference
  with time-series cross-sectional data.
\newblock {\em American Journal of Political Science\/}.

\bibitem[\protect\citeauthoryear{Moon and Weidner}{Moon and
  Weidner}{2015}]{moon2015linear}
Moon, H.~R. and M.~Weidner (2015).
\newblock Linear regression for panel with unknown number of factors as
  interactive fixed effects.
\newblock {\em Econometrica\/}~{\em 83\/}(4), 1543--1579.

\bibitem[\protect\citeauthoryear{Moon and Weidner}{Moon and
  Weidner}{2017}]{moon2017dynamic}
Moon, H.~R. and M.~Weidner (2017).
\newblock Dynamic linear panel regression models with interactive fixed
  effects.
\newblock {\em Econometric Theory\/}~{\em 33\/}(1), 158--195.

\bibitem[\protect\citeauthoryear{Robins, Rotnitzky, and Zhao}{Robins
  et~al.}{1994}]{robins1994estimation}
Robins, J.~M., A.~Rotnitzky, and L.~P. Zhao (1994).
\newblock Estimation of regression coefficients when some regressors are not
  always observed.
\newblock {\em Journal of the American statistical Association\/}~{\em
  89\/}(427), 846--866.

\bibitem[\protect\citeauthoryear{Shaikh and Toulis}{Shaikh and
  Toulis}{2021}]{shaikh2021randomization}
Shaikh, A.~M. and P.~Toulis (2021).
\newblock Randomization tests in observational studies with staggered adoption
  of treatment.
\newblock {\em Journal of the American Statistical Association\/}~{\em
  116\/}(536), 1835--1848.

\bibitem[\protect\citeauthoryear{Shen, Ding, Sekhon, and Yu}{Shen
  et~al.}{2022}]{shen2022tale}
Shen, D., P.~Ding, J.~Sekhon, and B.~Yu (2022).
\newblock A tale of two panel data regressions.
\newblock {\em arXiv preprint arXiv:2207.14481\/}.

\bibitem[\protect\citeauthoryear{Shi, Miao, Hu, and Tchetgen}{Shi
  et~al.}{2021}]{shi2021theory}
Shi, X., W.~Miao, M.~Hu, and E.~T. Tchetgen (2021).
\newblock Theory for identification and inference with synthetic controls: a
  proximal causal inference framework.
\newblock {\em arXiv preprint arXiv:2108.13935\/}.

\bibitem[\protect\citeauthoryear{Van~Handel}{Van~Handel}{2017}]{van2017spectral}
Van~Handel, R. (2017).
\newblock On the spectral norm of gaussian random matrices.
\newblock {\em Transactions of the American Mathematical Society\/}~{\em
  369\/}(11), 8161--8178.

\bibitem[\protect\citeauthoryear{Viviano and Bradic}{Viviano and
  Bradic}{2023}]{viviano2023synthetic}
Viviano, D. and J.~Bradic (2023).
\newblock Synthetic learner: model-free inference on treatments over time.
\newblock {\em Journal of Econometrics\/}~{\em 234\/}(2), 691--713.

\bibitem[\protect\citeauthoryear{Xu}{Xu}{2017}]{xu2017generalized}
Xu, Y. (2017).
\newblock Generalized synthetic control method: Causal inference with
  interactive fixed effects models.
\newblock {\em Political Analysis\/}~{\em 25\/}(1), 57--76.

\end{thebibliography}
\bibliographystyle{chicago}

\appendix

\section{Randomization inference with a single treated unit} \label{sec:placebo}


In this section, we revisit placebo tests in \cite{abadie2010synthetic} and illustrate how such tests can be used for inference. Consider conducting inference under the null hypothesis 
$$
H_0: \tau = 0, 
$$ 
which corresponds to a sharp null hypothesis by the assumption of additive and homogeneous treatment effects. We construct a set of ``placebo" treated units, where we pool with the treated unit few control units ``as if" they were treated.\footnote{The larger the set of placebo-treated units, the better the asymptotic approximation through randomization inference at the expense of reducing the size of the control pool. We study inference with few treated units for Synthetic Controls.} Algorithm \ref{alg:placebo} describes the procedure.

  \begin{algorithm} [!h]   \caption{Randomization inference for Synthetic DiD with a single treated unit}\label{alg:placebo}
    \begin{algorithmic}[1]
    \Require $N_1$ (number of placebo treated units), $N$ (number of control units after excluding the placebo treated units)
    \State Estimate a placebo treatment effect estimator for each unit $j \ge N$ as in Equation \eqref{eqn:placebo}, after excluding the other units in the same set $j \ge N$ when estimating the placebo treatment effect. 
    \State Construct confidence intervals by permuting which unit receives the treatment in the set $j \ge N$ and using the empirical distribution obtained from such permutations (after excluding the treated unit). \\
    \Return P-value for testing the null hypothesis of no treatment effect $\tau = 0$. 
         \end{algorithmic}
\end{algorithm}

\begin{ass}[Identification condition for Placebo tests] \label{ass:placebo}  Suppose that 

\begin{itemize} 
\item[(A)] For all $i, t$, 
$$
Y_{i,t}(0) \perp \Big[N_0, T_0\Big] \Big| \tilde{\Lambda}_t
$$  
\item[(B)] There exist weights $w_{v}^\star(i)$, for units $i \in \{N, \cdots, N + N_1\}$, for $N, N_1$ as in Algorithm \ref{alg:placebo}, satisfying $||w_v^\star(i)||_{\infty} = \mathcal{O}(N^{-2/3})$ and 
$$
\Big|\Big|\gamma_i - \sum_{j \neq i} w_{v, j}^\star(i) \gamma_j \Big|\Big|_2 = 0, \quad \sum_{j < N} w_{v, j}^\star(i) = 1.
$$ 
\end{itemize} 
\end{ass}  

Assumption \ref{ass:placebo} requires that the high dimensional loadings $\tilde{\Gamma}_i$ are exogenous and that we can match the endogenous loadings $\gamma_i$ of \textit{each} unit $i$ in the treated pool (where the treatment pool can be arbitrary). For given unit $j$, we construct a ``placebo" Synthetic Difference-in-Differences by taking 
\begin{equation} \label{eqn:placebo} 
\small 
\begin{aligned} 
\hat{\tau}^{pl}(j;w_v^\star, w_h^\star, v_h^\star) = \sum_{t \ge T} v_{h,t}^\star \Big\{Y_{j, t} - \sum_{s < T} w_{h,s}^\star Y_{j,s} - \sum_{i \neq j} w_{v,i}^\star(j) Y_{i,t} + \sum_{i \neq j, s< T} w_{v,i}^\star(j) w_{h,s}^\star Y_{i,s}\Big\}.
\end{aligned}
\end{equation} 
The estimator $\hat{\tau}^{pl}(j; w_v^\star, w_h^\star, v_h^\star)$ uses the same horizontal weights for each placebo units, and different vertical weights for each unit $j$. 

\begin{thm} \label{thm:place}  Suppose that Assumptions \ref{ass:outcomea}, \ref{ass:inda}, \ref{ass:iid}, \ref{ass:placebo} hold. Let $\delta_h = \mathbf{1}$. Then under the null hypothesis $H_0: \tau = 0$, for $N_1$ fixed, $T_1 \ge 1, x \in \mathbb{R}$, as $T, N \rightarrow \infty$, with $T_1/N^{1/3} = o(1)$
$$
\small 
\begin{aligned} 
& \Big| P\Big(\frac{\hat{\tau}^{pl}(n; w_v^\star, w_h^\star, v_h^\star)}{||v_h^\star||_2} \le x \Big| (\tilde{\Lambda}_t)_{t \ge 1}, T_0 = T, N_0 = N\Big) \\ &\quad \quad \quad - P\Big(\frac{\hat{\tau}^{pl}(N; w_v^\star, w_h^\star, v_h^\star)}{||v_h^\star||_2} \le x\Big| (\tilde{\Lambda}_t)_{t \ge 1}, T_0 = T, N_0 = N\Big)\Big| \rightarrow 0, 
\end{aligned}
$$ 
with $\{\hat{\tau}^{pl}(n; w_v^\star, w_h^\star, v_h^\star)/||v_h^\star||_2\}_{n \ge N}$ asymptotically independent across $n$, conditional on $(\tilde{\Lambda}_t)_{t \ge 1}, T_0 = T, N_0 = N$. 
\end{thm}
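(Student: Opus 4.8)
The plan is to establish two things separately: (i) an asymptotic distributional approximation showing that each normalized placebo statistic $\hat{\tau}^{pl}(n;w_v^\star,w_h^\star,v_h^\star)/\|v_h^\star\|_2$ has the same limiting law conditional on $(\tilde{\Lambda}_t)_{t\ge 1}$, $T_0=T$, $N_0=N$, and (ii) asymptotic independence across the placebo units $n\ge N$. First I would expand each placebo estimator using Assumption \ref{ass:outcomea}. Under the null $\tau=0$, $Y_{i,t}=Y_{i,t}(0)=(\lambda_t+\tilde{\Lambda}_t)^\top(\gamma_i+\tilde{\Gamma}_i)+\iota_{0,i}+\iota_{1,t}+\varepsilon_{i,t}$. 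Plugging into \eqref{eqn:placebo} and using that the vertical weights $w_v^\star(j)$ sum to one (so the $\iota_{0,\cdot}$ and the $\tilde{\Lambda}_t^\top$-free pieces cancel across units) and that the horizontal weights $w_h^\star$, $v_h^\star$ sum to one (so $\iota_{1,\cdot}$ and the $\lambda$-only pieces cancel), the leading term should reduce to a double-difference in the $\varepsilon$'s plus a term involving $\tilde{\Lambda}_t^\top\big(\gamma_j - \sum_{i\ne j} w_{v,i}^\star(j)\gamma_i\big) + \tilde{\Lambda}_t^\top\big(\tilde{\Gamma}_j - \sum_{i\ne j} w_{v,i}^\star(j)\tilde{\Gamma}_i\big)$ weighted by $v_{h,t}^\star$. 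Here Assumption \ref{ass:placebo}(B) kills the endogenous-loading piece $\gamma_j - \sum w_{v,i}^\star(j)\gamma_i = 0$ exactly, so conditional on $(\tilde{\Lambda}_t)_{t\ge1}$ the remaining randomness comes from $\varepsilon$ and from the exogenous high-dimensional loadings $\tilde{\Gamma}_i$ (whose exogeneity under Assumption \ref{ass:placebo}(A) means $Y_{i,t}(0)\perp[N_0,T_0]\mid\tilde{\Lambda}_t$, so the conditional distribution of $\tilde{\Gamma}_i$ is the unconditional i.i.d. one from Assumption \ref{ass:iid}).

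Second, I would normalize by $\|v_h^\star\|_2$ and apply a central limit theorem in $T_1$ is not available (since $T_1$ is fixed here); instead the asymptotics are driven by $T\to\infty$ through the pre-treatment averaging built into the weights $w_h^\star$ (with $\|w_h^\star\|_\infty = \mathcal{O}(T_0^{-2/3})$) and by $N\to\infty$ through $\|w_v^\star(j)\|_\infty = \mathcal{O}(N^{-2/3})$. The key point is that the pre-treatment pieces $\sum_{s<T} w_{h,s}^\star(\cdot)$ and the control-pool pieces $\sum_{i\ne j} w_{v,i}^\star(j)(\cdot)$ are weighted averages of many (nearly) independent terms with vanishing max-weight, so by a Lindeberg/Lyapunov argument (using $\mathbb{E}[\varepsilon^3]<\infty$ from Assumption \ref{ass:outcomea}) they concentrate; this is exactly the mechanism used in the proof of Theorem \ref{thm:horizontal_asym}, and the side condition $T_1/N^{1/3}=o(1)$ is the analogue of Assumption \ref{ass:horizontal_inference}(A) that makes the post-treatment contribution dominate cleanly. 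After this reduction, $\hat{\tau}^{pl}(n;\cdot)/\|v_h^\star\|_2$ is, up to $o_p(1)$, a fixed function of $\{\varepsilon_{n,t}\}_{t\ge T}$, $\{\tilde{\Gamma}_n\}$ and the common exogenous objects; since $\tilde{\Gamma}_n\sim_{i.i.d.}\mathcal{P}_\Gamma$ and $\varepsilon_{n,t}\sim_{i.i.d.}\mathcal{P}$ conditional on the event and on $(\tilde{\Lambda}_t)$, each normalized placebo statistic has the \emph{same} conditional law, which gives the displayed difference-of-CDFs convergence to $0$.

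For the asymptotic independence claim, I would note that after the expansion the statistic for unit $n$ depends on the unit-specific noise $\varepsilon_{n,\cdot}$ and loading $\tilde{\Gamma}_n$, plus a shared term built from the \emph{control pool} $\{\varepsilon_{i,\cdot},\tilde{\Gamma}_i\}_{i\ne n, i<N}$ entering through $w_v^\star(n)$. The overlap between the control pools of two distinct placebo units $n\ne n'$ is the entire pool minus two units, so naively the statistics share a common component; the resolution is that this shared component is $\sum_{i} w_{v,i}^\star(\cdot)(\cdot)$ with $\|w_v^\star\|_\infty=\mathcal{O}(N^{-2/3})$, hence it is $o_p(1)$ after normalization (it is a vanishing-weight average and by Assumption \ref{ass:placebo}(B) its systematic part vanishes), so it does not contribute to the limit. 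What survives for unit $n$ is only the idiosyncratic $\{\varepsilon_{n,t}\}_{t\ge T}$-and-$\tilde{\Gamma}_n$ part, and these are independent across $n$ by Assumptions \ref{ass:outcomea} and \ref{ass:iid}; hence asymptotic independence. I expect the main obstacle to be precisely this last point: making rigorous that the shared control-pool component is asymptotically negligible uniformly, i.e. controlling $\big\|\sum_{i\ne n} w_{v,i}^\star(n)(\tilde{\Lambda}_t^\top\tilde{\Gamma}_i + \varepsilon_{i,t})\big\|$ after scaling by $1/\|v_h^\star\|_2$ and summing against $v_h^\star$, which requires the interplay of the $\ell_\infty$ weight bounds, the moment bounds in Assumption \ref{ass:iid}/\ref{ass:outcomea}, and the rate condition $T_1/N^{1/3}=o(1)$ — essentially re-running the bias/variance bookkeeping of Theorems \ref{thm:horizontal_asym} and \ref{thm:dr} in the fixed-$N_1$, fixed-$T_1$ regime.
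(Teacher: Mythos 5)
Your proposal is correct and follows essentially the same route as the paper: expand the placebo estimator under the null, use Assumption \ref{ass:placebo}(B) and the sum-to-one constraints to cancel the endogenous loadings and fixed effects, show the shared control-pool term is $\mathcal{O}_p(N^{-1/6})$ (hence negligible after dividing by $\|v_h^\star\|_2 \ge T_1^{-1/2}$ under $T_1/N^{1/3}=o(1)$), and deduce identical conditional laws and asymptotic independence from the i.i.d.\ exogenous $\tilde{\Gamma}_n$ and $\varepsilon_{n,\cdot}$ conditional on $(\tilde{\Lambda}_t)_{t\ge1}$. If anything, you are slightly more careful than the paper on two points it glosses over: that no CLT in $T_1$ is available since $T_1$ may be fixed (so the conclusion rests on exchangeability of the leading terms rather than normality), and that the vertical differencing is what removes the unit-specific $\gamma_j^\top\tilde{\Lambda}_t$ contribution that would otherwise break identical distribution across placebo units.
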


\begin{proof}[Proof of Theorem \ref{thm:place}] See Appendix \ref{sec:proof_placebo}. 
\end{proof} 

As a direct corollary of results in \cite{canay2017randomization}, we can use placebo tests as in \cite{abadie2010synthetic} to construct confidence intervals.  Here, the validity of the placebo tests relies on Assumption \ref{ass:placebo} which, together with exogeneity of $\tilde{\Gamma}_i$ imposes a symmetry restriction on the units used for the placebo test.\footnote{See also \cite{hahn2017synthetic} for a discussion on placebo tests. Results for placebo tests (studied in the different context of conformal inference) are also discussed in \cite{chernozhukov2021exact}. }

\section{Convergence rates of the weights} \label{app:convergence}

In the following proposition, we assume that the penalties for estimating the weights are 
$p_h = (\sigma_h^2 - 1)\zeta_h, p_h^\star = \sigma_h^2 \zeta_h$, where $\sigma_h^2 = \frac{1}{N - 1} \sum_{i=1}^{N-1} \Big((\tilde{\Gamma}_i + \gamma_i) \Sigma_{\tilde{\Lambda}}(\tilde{\Gamma}_i + \gamma_i)\Big) + \mathrm{Var}(\varepsilon_{i,t})$. Note that $\sigma_h^2$ does not need to be known by the researcher, since the penalty multplies by an arbitrary parameter $\zeta_h$. Properties of $\zeta_h$ affects the guarantees in the following proposition. 

\begin{prop}[Weights estimation error] \label{proof:estimation_error}  Suppose that Assumptions \ref{ass:outcomea},  \ref{ass:inda}, \ref{ass:iid}, \ref{ass:unc1a} hold. Let $\sigma_h^2 > c \times \mathrm{rank}(A)/N_0$ for a finite constant $c$ independent of $\theta^\star, N_0, T_0, N_1, T_1$.
Take any $k \ge 1$. Then conditional on $N_0 = N, T_0 = T$ and $(\tilde{\Gamma}_i)_{i \ge 1}$, 
$$
\small 
\begin{aligned} 
\zeta_h^{1/2} ||(\hat{\theta}_h - \theta_h^\star)||_2 \le c_0 s, \quad ||\hat{\beta}_h - \beta_h^\star  + A (\hat{\theta}_h - \theta_h^\star)||_2 \le c_0' \zeta_h N_0^{1/2} s
\end{aligned} 
$$   
with probability at least $1 - c' \exp(-c'' u(k,s))$, for any $s$ satisfying
$$
\small 
\begin{aligned} 
s^2 & \ge \frac{c'}{\zeta_h - c \times  \mathrm{rank}(A)/N_0}  \Big[ \frac{k^2 \mu^2 \log(T_0 + T_1)}{ N_0 } + \frac{(k^2 \mu^2 \mathrm{rank}(A)) ||\theta^\star||_2^2}{ N_0 } + \frac{||\theta^\star||_2  (k^2 \mu^2 \mathrm{rank}(A))^{1/2}}{ N_0 } \\ &\quad \quad \quad + \frac{k \mu ||A \theta_h^\star + \beta_h^\star||\log(T_0 + T_1) + \mu^2 k ||\theta^\star||_2 N_0^{1/2}  \log(T_0 + T_1)}{ N_0} \Big].
\end{aligned} 
$$ 
where $u(v,s) = \min\{v^2 \mu^2 \log(T_0 + T_1)/s^2, k^2 \mathrm{rank}(A), N_0\}$, and $c',c'', c_0, c_0' < \infty$ are finite constant independent of $\theta^\star, N_0, T_0, N_1, T_1$.  
\end{prop}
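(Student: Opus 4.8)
The plan is to recognize the penalized weights estimator of Equation \eqref{eqn:est1} as an error-in-variables least squares problem and then to invoke the rate bounds of \cite{hirshberg2021least}, after checking that its hypotheses hold in the present confounding model. First I would condition throughout on $N_0 = N$, $T_0 = T$ and $(\tilde{\Gamma}_i)_{i \ge 1}$, and rewrite $Y_{i,t} = \lambda_t^\top \Gamma_i + \iota_{1,t} + \eta_{i,t}$ with $\eta_{i,t} = \tilde{\Lambda}_t^\top \Gamma_i + \varepsilon_{i,t}$ and $\Gamma_i = \gamma_i + \tilde{\Gamma}_i$. Stacking over the control units $i < N$ and over time, the empirical objective $\mathcal{L}_{h,N,T}$ equals $\tfrac{1}{N}\|(A+\tilde{\eta})\theta + \beta\mathbf{1}\|_2^2 + p_h\|\theta\|_2^2$ up to terms independent of $(\theta,\beta)$, with $A$, $\tilde{\eta}$ and $\theta = (w_h,v_h)$ as defined before the statement, while the oracle objective in Equation \eqref{eqn:obj_oracles} is the same with $A$ replacing $A+\tilde{\eta}$ and $p_h^\star = \sigma_h^2\zeta_h$ replacing $p_h = (\sigma_h^2-1)\zeta_h$. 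The penalty gap $\zeta_h N_0\|\theta\|_2^2$ between the two is exactly the self-normalization term absorbing the curvature that $\tilde{\eta}^\top\tilde{\eta}$ introduces, as already indicated in Equation \eqref{eqn:error_in_model}.

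Next I would verify the conditions required by \cite{hirshberg2021least}. For the signal: since $A_{i,t} = \pm(\lambda_t^\top\Gamma_i + \iota_{0,t})$, its rank is at most the number of nonzero coordinates of $\lambda_\cdot$ plus one, so it is controlled by the number of endogenous factors and is insensitive to the dimension of $\tilde{\Lambda}_t$. For the noise: by Assumption \ref{ass:inda} the columns of $\tilde{\eta}$ (indexed by $t$) are independent; by Assumption \ref{ass:iid} each has $\ell_2$-norm bounded up to the $\varepsilon_{i,t}$ contribution, which has finite third moment, and conditional on the loadings its covariance is the matrix $\Sigma_\eta$ of Equation \eqref{eqn:matrix1} with $\|\Sigma_\eta\|_{op} = \mu^2$. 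I would also record that $\sigma_h^2 > 0$ by Assumption \ref{ass:iid}, and that the hypothesis $\sigma_h^2 > c\,\mathrm{rank}(A)/N_0$ is what guarantees the penalized empirical loss remains strongly convex with modulus of order $\zeta_h - c\,\mathrm{rank}(A)/N_0$, the quantity in the denominator of $s^2$.

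Then comes the concentration step. Using matrix concentration inequalities (e.g. \cite{van2017spectral}) conditional on the loadings, I would bound: (i) $\|\tfrac{1}{N}\tilde{\eta}^\top\tilde{\eta} - \Sigma_\eta\|_{op}$, which supplies the curvature correction and the term $\mu^2\log(T_0+T_1)/N_0$; (ii) the cross terms $\tfrac{1}{N}A^\top\tilde{\eta}$ and $\tfrac{1}{N}\mathbf{1}^\top\tilde{\eta}$ restricted to the rank-$\mathrm{rank}(A)$ column space of $A$, which supply $\mu^2\mathrm{rank}(A)\|\theta^\star\|_2^2/N_0$ and $\|\theta^\star\|_2(\mu^2\mathrm{rank}(A))^{1/2}/N_0$; and (iii) the interaction of $\tilde{\eta}$ with the oracle residual $A\theta_h^\star + \beta_h^\star$, which supplies the two remaining terms involving $\|A\theta_h^\star + \beta_h^\star\|$ and $\|\theta^\star\|_2 N_0^{1/2}\log(T_0+T_1)$. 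Each such event fails only with probability $c'\exp(-c''(\cdot))$, where the exponent is the minimum of a sub-Gaussian rate $k^2\mu^2\log(T_0+T_1)/s^2$, a dimension-limited rate $k^2\mathrm{rank}(A)$, and the sample-size rate $N_0$; intersecting the events yields $u(k,s)$. Finally, since $\hat{\theta}_h$ minimizes the penalized empirical loss and $\theta_h^\star$ the penalized oracle loss, the basic inequality combined with the strong-convexity modulus gives a quadratic inequality in $\|\hat{\theta}_h - \theta_h^\star\|_2$ whose solution is the displayed $s$, and the bound on $\|\hat{\beta}_h - \beta_h^\star + A(\hat{\theta}_h - \theta_h^\star)\|_2$ follows from the same inequality applied to the fitted-value error.

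The hard part will be steps two and three carried out jointly: $\tilde{\eta}_{i,t} = \tilde{\Lambda}_t^\top\Gamma_i + \varepsilon_{i,t}$ is independent only along $t$ and, conditional on the loadings, has a structured low-rank-plus-diagonal covariance, so the operator-norm and empirical-process bounds must be re-derived in this conditional model rather than quoted off the shelf. In particular, showing that the excess curvature $\tilde{\eta}$ adds beyond $\Sigma_\eta$ is only of order $\mathrm{rank}(A)/N_0$ — which is what makes the modulus $\zeta_h - c\,\mathrm{rank}(A)/N_0$ legitimate and what forces the hypothesis $\sigma_h^2 > c\,\mathrm{rank}(A)/N_0$ — is the delicate point, since naively one would only get a curvature correction of order one.
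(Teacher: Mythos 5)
Your proposal follows essentially the same route as the paper, whose entire proof is the observation that, after conditioning on $N_0, T_0$ and the loadings $\Gamma_i = \gamma_i + \tilde{\Gamma}_i$, the statement follows verbatim from Theorem 1 of \cite{hirshberg2021least} applied to the error-in-variables formulation in Equation \eqref{eqn:error_in_model}. Your additional work verifying the rank, independence-in-$t$, and covariance conditions, and sketching the internal concentration and basic-inequality steps, is a more detailed unpacking of the same reduction rather than a different argument.
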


The proof follows verbatim from Theorem 1 in \cite{hirshberg2021least}, once, in our framework, we condition on $\Gamma_i = \gamma_i + \tilde{\Gamma}_i$. Proposition \ref{proof:estimation_error} is stated \textit{conditional} on $N_0 = N, T_0 = T$ and $\tilde{\Gamma}_i$. The estimation error depends on the rank of the matrix $A$ (i.e., the number of endogenous factors), but not on the rank of $\Gamma_i^\top \tilde{\Lambda}_t$.

\section{Proofs}

Let $\Gamma_i = \tilde{\Gamma}_i + \gamma_i$, $\Lambda_t = \tilde{\Lambda}_t + \lambda_t$. Recall that $(\gamma_i)_{i \ge 1}, (\lambda_t)_{t \ge 1}$ are deterministic. 

\subsection{Proof of Proposition \ref{lem:3a}} \label{proof:lem:3a}
Assumption \ref{ass:outcomea} guarantees that the potential outcome under control follows a factor model. Assumption \ref{ass:unc1a} guarantees that we can match $\sum_{t \ge T} v_{h,t} \lambda_t = \sum_{s < T} w_{h,s} \lambda_s$ and that $\mathbb{E}[\tilde{\Lambda}_t | T_0 = T] = \mathbb{E}[\tilde{\Lambda}_t]$. Assumption \ref{ass:inda} guarantees that $\tilde{\Lambda}_t$ is independent of $N_0$. The restriction that the weights sum to one in Assumption \ref{ass:unc1a} guarantees that $\sum_{t \ge T} v_{h,t} \iota_{0, i} = \sum_{s < T} w_{h, s} \iota_{0, i}$ (the weighted combination of unit fixed effects) is the same on the right and left-hand side of Equation \eqref{eqn:ajh}. Proposition \ref{lem:3a} holds for $\beta_{0, h}(w_h, v_h) = \sum_{t \ge T} v_{h,t} \iota_{1, t} - \sum_{s < T} w_{h,s} \iota_{1, s}$. 

\subsection{Proof of Proposition \ref{cor:2}} \label{proof:dr}

\paragraph{Error for unit $(N,T$)} We first focus on unit $(N,T)$ whereas the reasoning applies to all other units $i \ge N, t \ge T$. Define 
$$
\small 
\begin{aligned} 
\bar{\Lambda}_T(w_v) = \sum_{s < T} w_{v, j} \mathbb{E}\Big[\Lambda_s \Big| T_0 = T, N_0 = N, \Lambda_T\Big], \bar{\Gamma}_N(w_h)  = \sum_{j < N} w_{v, j} \mathbb{E}\Big[\Gamma_j \Big| \Gamma_N, T_0 = T, N_0 = N\Big] . 
\end{aligned} 
$$ 
Because of Assumptions \ref{ass:outcomea} and \ref{ass:inda}, we can write 
$$
\small 
\begin{aligned} 
\sum_{s < T} w_{h,s} \mathbb{E}\Big[Y_{N,s} \Big| T_0 = T, N_0 = N, \Gamma_N, \Lambda_T\Big]  & = \Gamma_N \sum_{s < T} w_{h,s} \mathbb{E}\Big[\Lambda_s | \Lambda_T, T_0 = T, N_0 = N\Big] + \sum_{s <T} w_{h,s} \iota_{1, s} + \iota_{0,N} \\
& = \Gamma_N \bar{\Lambda}_T(w_h) + \sum_{s <T} w_{h,s} \iota_{1, s} + \iota_{0,N} \\ 
 \sum_{j < N} w_{v, j} \mathbb{E}\Big[Y_{j,T} \Big| T_0 = T, N_0 = N, \Gamma_N, \Lambda_T\Big] & = \Lambda_T \sum_{j < N} w_{v, j} \mathbb{E}\Big[\Gamma_j \Big| \Gamma_N, T_0 = T, N_0 = N\Big] + \sum_{j < N} w_{v, j} \iota_{0, j} + \iota_{1,T} \\
&  = \Lambda_T \bar{\Gamma}_N(w_v) + \sum_{j < N} w_{v, j} \iota_{0, j} + \iota_{1,T} . 
\end{aligned} 
$$  
In addition,
$$
\small 
\begin{aligned}
& \sum_{s<T} \sum_{j < N} w_s w_{v, j} \mathbb{E}\Big[Y_{j,s} \Big| T_0 = T, N_0 = N, \Gamma_N, \Lambda_T\Big]   \\ &= \sum_{j < N} w_{v, j} \mathbb{E}\Big[\Gamma_j \Big| \Gamma_N, T_0 = T, N_0 = N\Big]  \sum_{s < T} w_{h,s} \mathbb{E}\Big[\Lambda_s \Big| \Lambda_T, T_0 = T, N_0 = N\Big] \quad (\because \text{ Assumption } \ref{ass:inda}) \\ & + \sum_{s < T} w_{h, s} \iota_{1, s} + \sum_{j < N} w_{v,j} \iota_{0, j} = \bar{\Gamma}_N(w_v) \bar{\Lambda}_T(w_h) + \sum_{s < T} w_{h, s} \iota_{1, s} + \sum_{j < N} w_{v, j} \iota_{0, j}.  
\end{aligned} 
$$ 
It follows that 
$$
\small 
\begin{aligned} 
 & \sum_{s < T} w_{h, s} \mathbb{E}\Big[Y_{N,s} \Big| T_0 = T, N_0 = N, \Gamma_N, \Lambda_T\Big] + \sum_{j < N} w_{v, j} \mathbb{E}\Big[Y_{j,T} \Big| T_0 = T, N_0 = N, \Gamma_N, \Lambda_T\Big] \\ &-  \sum_{s<T} \sum_{j < N} w_s w_{v, j} \mathbb{E}\Big[Y_{j,s} \Big| T_0 = T, N_0 = N, \Gamma_N, \Lambda_T\Big] - \Big(\Lambda_T \Gamma_N + \iota_{0,N} + \iota_{1, T}\Big) \\ 
&= \Gamma_N \bar{\Lambda}_T(w_h) + \bar{\Gamma}_N(w_v) \Lambda_T - \bar{\Gamma}_N(w_v)\bar{\Lambda}_T(w_h) - \Lambda_T \Gamma_N \\ 
&= (\Gamma_N - \bar{\Gamma}_N(w_v))(\bar{\Lambda}_T(w_h) - \Lambda_T) .  
\end{aligned} 
$$ 
The equation above directly extends to any unit $(i, t), i \ge N, t \ge T$. The final result follows by first taking expectations of $(\Gamma_i - \bar{\Gamma}_i(w_v))(\bar{\Lambda}_i(w_h) - \Lambda_i), i \ge N, t \ge T$ over $\tilde{\Gamma}_i, \tilde{\Lambda}_t$ and using Assumption \ref{ass:inda}; and then averaging over treated units and treatment periods, with corresponding weights $v_v, v_h$.

\subsection{Proof of Theorem \ref{thm:horizontal_asym}} \label{proof:thm_hor1}

We break the proof into multiple steps, where we define $\Gamma_i = \tilde{\Gamma}_i + \gamma_i$, $\sigma_{\varepsilon}^2 = \mathbb{E}[\varepsilon_{i,t}^2]$. Recall that $\mathbb{E}\Big[\hat{\tau}^h(w_h^\star, v_h^\star, \beta_h^\star, q_h) - \tau \Big| T_0 = T, N_0 = N, (\Gamma_i)_{i\ge 1}\Big] = 0$ by Proposition \ref{lem:expectations}.

\paragraph{Order of convergence for post-treatment period}  First, we claim that 
\begin{equation} \label{eqn:pre}
\sum_{s < T} w_{n,s}^\star \Big\{Y_{n,s} - \mathbb{E}\left[Y_{n,s} | N_0 = N, T_0 = T, (\Gamma_i)_{i\ge 1}\right]\Big\} = O_p(||w_h^\star||_2) = O_p(T_0^{-1/6}). 
\end{equation}
To show this claim we use Assumption \ref{ass:outcomea} and Assumption \ref{ass:horizontal_inference} (A). First, we write 
$$
\sum_{s < T} w_{n,s}^\star \Big\{Y_{n,s} - \mathbb{E}\left[Y_{n,s} | N_0 = N, T_0 = T, (\Gamma_i)_{i\ge 1}\right] \Big\}= \Gamma_n^\top \sum_{s < T} w_{n,s}^\star \tilde{\Lambda}_s + \sum_{s < T} w_{n,s}^\star \varepsilon_{n,s}. 
$$ 
By Assumptions \ref{ass:inda},  \ref{ass:iid} 
$$
\begin{aligned} 
\mathbb{V}\left( \Gamma_n^\top \sum_{s < T} w_{n,s}^\star \tilde{\Lambda}_s + \sum_{s < T} w_{n,s}^\star \varepsilon_{n,s} \Big| N_0 = N, T_0 = T, (\Gamma_i)_{i\ge 1} \right) & &\le  ||\Gamma_n||_2^2 \mathbb{E}[||\tilde{\Lambda}_t||_2^2] ||w_h^\star||_2^2 + ||w_h^\star||_2^2 \sigma_\varepsilon^2.
\end{aligned}
$$ 
Therefore since $||w_h^\star||_{\infty} \le T_0^{-2/3}$, and from Assumption \ref{ass:iid},
$$
\sum_{s < T} w_{n,s}^\star \Big\{ Y_{n,s} - \mathbb{E}\left[Y_{n,s} | N_0 = N, T_0 = T, (\Gamma_i)_{i\ge 1}\right] \Big\} = O_p(||w_h^\star||_2) = O_p(T_0^{-1/6}). 
$$ 
\paragraph{CLT for fixed unit $n$} We can write 
$$
\begin{aligned}
& \sum_{t \ge T} v_{h,t}^\star Y_{n,t}  - \sum_{t \ge T} v_{h,t}^\star \mathbb{E}[Y_{n,t} | \Gamma_n, N_0 = N, T_0 = T]  = \Gamma_n^\top \sum_{t \ge T} v_{h,t}^\star \tilde{\Lambda}_t + \sum_{t \ge T} v_{h,t}^\star \varepsilon_{n,t} .  
\end{aligned} 
$$ 
We can use the Lyaponuv's central limit theorem here. In particular, we have under Assumption \ref{ass:inda}
$$
\small 
\begin{aligned} 
& \mathbb{E}\Big[\Big(\Gamma_n^\top \sum_{t \ge T} v_{h,t}^\star \tilde{\Lambda}_t + \sum_{t \ge T} v_{h,t}^\star \varepsilon_{n,t} \Big)^3 \Big| (\Gamma_i)_{i \ge 1}, N_0 = N, T_0 = T\Big] \\ & = \mathbb{E}\Big[\Big(\Gamma_n^\top \sum_{t \ge T} v_{h,t}^\star \tilde{\Lambda}_t\Big)^3 + \Big(\sum_{t \ge T} v_{h,t}^\star \varepsilon_{n,t}\Big)^3 \Big| (\Gamma_i)_{i \ge 1}, N_0 = N, T_0 = T\Big]  \\ & = \sum_{t \ge T} v_{h,t}^{\star, 3}  \mathbb{E}[(\Gamma_n^\top \tilde{\Lambda}_t)^3 | \Gamma_n] + \sum_{t \ge T} v_{h,t}^{\star, 3} \mathbb{E}[\varepsilon_{i,t}^3] \quad (\because \text{ Assumption } \ref{ass:unc1a}) \\ & =  
\mathcal{O}\Big(||v_h^\star||_{\infty} ||v_{h}^\star||_2^2 \Big) \quad (\because \text{ Assumption } \ref{ass:iid}) 
\end{aligned} 
$$ 
Also, by Assumption \ref{ass:inda}
$$
\mathrm{Var}\left(\Gamma_n \sum_{t \ge T} v_{h,t} \tilde{\Lambda}_t + \sum_{t \ge T} v_{h,t} \varepsilon_{n,t} \Big| (\Gamma_i)_{i \ge 1}, T_0 = T, N_0 = N\right) \ge ||v_h^\star||_2^2 \sigma_{\varepsilon}^2.
$$ 
It follows 
$$
\small 
\begin{aligned} 
\frac{\mathbb{E}\Big[\Big(\Gamma_n \sum_{t \ge T} v_{h,t} \tilde{\Lambda}_t + \sum_{t \ge T} v_{h,t} \varepsilon_{n,t}\Big)^3 \Big| (\Gamma_i)_{i \ge 1}, N_0 = N, T_0 = T\Big]}{\mathrm{Var}\left(\Gamma_n \sum_{t \ge T} v_{h,t} \tilde{\Lambda}_t + \sum_{t \ge T} v_{h,t} \varepsilon_{n,t} \Big| (\Gamma_i)_{i \ge 1}, T_0 = T, N_0 = N\right)^{3/2}} = \mathcal{O}(T_1^{-2/3} ||v_h^\star||_2^{-1}). 
\end{aligned} 
$$ 
Because $||v_h^\star||_2 \ge 1/\sqrt{T}_1$ for any $1^\top v_h^\star = 1$, by Lyapunov's central limit theorem for any $n$, 
$$
\frac{1}{\sqrt{||v_h^\star||_2^2 \Gamma_n^\top \Sigma_{\tilde{\Lambda}} \Gamma_n + \sigma_\varepsilon^2 ||v_h^\star||_2^2}} \Big(\sum_{t \ge T} v_{h,t}^\star Y_{n,t}  - \sum_{t \ge T} v_{h,t}^\star \mathbb{E}[Y_{n,t} | \Gamma_n, N_0 = N, T_0 = T] \Big)  \rightarrow \mathcal{N}(0, 1), 
$$ 
as $T_1 \rightarrow \infty$.

\paragraph{Central limit theorem after summing over $N_1$ treated units} 
Consider now summing $N_1$ treated units for finite $N_1$, taking 
$$
\sum_{n \ge N} q_{h, n} \Big\{\sum_{t \ge T} v_{h,t}^\star Y_{n,t}   - \sum_{t \ge T} v_{h,t}^\star \mathbb{E}[Y_{n,t} | \Gamma_n, N_0 = N, T_0 = T]\Big\}. 
$$
It follows that as $T_1 \rightarrow \infty$, for given $N_1$, we can show using the same technique as above that where $\Gamma_n$ is replaced by $\sum_{n \ge N} q_{h,n} \Gamma_n$ and $\varepsilon_{n,t}$ by $\sum_{n \ge N} q_{h,n} \varepsilon_{n,t}$, 
\begin{equation} \label{eqn:variance_to_bound} 
\small 
\begin{aligned} 
\frac{\sum_{n \ge N} q_{h, n} \Big\{\sum_{t \ge T} v_{h,t}^\star Y_{n,t}   - \sum_{t \ge T} v_{h,t}^\star \mathbb{E}[Y_{n,t} | \Gamma_n, N_0 = N, T_0 = T]\Big\}}{\sqrt{||v_h^\star||_2^2 \Big(\sum_{n \ge N} q_{h,n} \Gamma_n\Big)^\top \Sigma_{\tilde{\Lambda}} \Big(\sum_{n \ge N} q_{h,n} \Gamma_n\Big) + ||v_h^\star||_2^2 ||q_h||_2^2 \sigma_{\varepsilon}^2}}  \rightarrow_d \mathcal{N}\Big(0, 1\Big).
\end{aligned} 
\end{equation}

\paragraph{Lower bound on the variance component} Finally, we show that the component in Equation \eqref{eqn:pre} converges at a faster rate than the component in Equation \eqref{eqn:variance_to_bound}. In particular,
$$
\small 
\begin{aligned} 
||v_h^\star||_2^2 \Big(\sum_{n \ge N} q_{h,n} \Gamma_n\Big)^\top \Sigma_{\tilde{\Lambda}} \Big(\sum_{n \ge N} q_{h,n} \Gamma_n\Big) + ||v_h^\star||_2^2 ||q_h||_2^2 \sigma_{\varepsilon}^2 \ge \frac{\sigma_{\varepsilon}^2}{N_1 T_1}
\end{aligned} 
$$ 
since the weights sum to one, and therefore $||v_h^\star||_2^2 \ge 1/T_1, ||q_h||_2^2 \ge 1/N_1$. As a result, because $N_1 T_1/T_0^{1/3} = o(1)$ by Assumption \ref{ass:horizontal_inference}, and $T_0^{-1/3}$ is the order of convergence of the variance of the component in Equation \eqref{eqn:pre}, the final result directly follows.

\subsection{Proof of Theorem \ref{thm:dr}} \label{proof:long}

We study asymptotic normality if either Assumption \ref{ass:unc1a} or Assumption \ref{ass:unc2a} hold. Suppose Assumption \ref{ass:unc1a} holds first, and therefore we condition on $(\Gamma_i)_{i \ge 1}$, with $\Gamma_i = \tilde{\Gamma}_i + \gamma_i$.

\paragraph{Preliminaries when Assumption \ref{ass:unc1a} (and we can condition on $\Gamma_i$)} Under Assumption \ref{ass:unc1a}, we have
$$
\small 
\begin{aligned} 
\mathbb{E}\Big[\hat{\tau}(w_h^\star, w_v^\star, v_h^\star, v_v^\star) - \tau \Big| N_0 = N, T_0 = T, (\Gamma_i)_{i \ge 1}\Big] = 0. 
\end{aligned}
$$  
We now decompose $\hat{\tau}(w_h^\star, w_v^\star, v_h^\star, v_v^\star)$ in different components and study their convergence rates. 

\paragraph{Asymptotically negligible components of $\hat{\tau}(w_h^\star, w_v^\star, v_h^\star, v_v^\star)$}
First we study two components of $\hat{\tau}(w_h^\star, w_v^\star, v_h^\star, v_v^\star)$ and show that their convergence rate is of order $\mathcal{O}_p(T_0^{-1/6})$. The first component is
$$
\small 
\begin{aligned} 
& \sum_{i < N} \sum_{s < T} w_{v, i}^\star w_{h, s}^\star \Big\{Y_{i,s} - \mathbb{E}\Big[Y_{i,s} | N_0 = N, T_0 = T, (\Gamma_i)_{i \ge 1}\Big]\Big\}
\\&= \sum_{i < N} w_{v, i}^\star \Gamma_i \sum_{s < T}  w_{h, s}^\star \tilde{\Lambda}_s +  \sum_{i < N} \sum_{s < T} w_{v, i}^\star w_{h, s}^\star \varepsilon_{i,s}. 
\end{aligned}
$$ 
We have from Assumption \ref{ass:iid}
$$
\small 
\begin{aligned} 
& \mathrm{Var}\left(\sum_{i < N} w_{v, i}^\star \Gamma_i \sum_{s < T}  w_{h, s}^\star \tilde{\Lambda}_s +  \sum_{i < N} \sum_{s < T} w_{v, i}^\star w_{h, s}^\star \varepsilon_{i,s} \Big| T_0 = T, N_0 = N, (\Gamma_i)_{i \ge 1}\right) \\ &\le \sigma_{\varepsilon}^2 ||w_v^\star||_2^2 ||w_h^\star||_2^2 + ||w_h^\star||_2^2 \mathcal{O}(1) = \mathcal{O}(||w_h^\star||_2^2). 
\end{aligned} 
$$ 
Because $||w_h^\star||_2^2 \le T^{-1/3}$, it follows that  
$$
\sum_{i < N} \sum_{s < T} w_{v, i}^\star w_{h, s}^\star \Big\{Y_{i,s} - \mathbb{E}\Big[Y_{i,s} | N_0 = N, T_0 = T, (\Gamma_i)_{i \ge 1}\Big]\Big\} = \mathcal{O}_p(T^{-1/6}).
$$ 
The second component is 
$$
\sum_{s < T} w_{h, s}^\star \Big\{ Y_{n,s} - \mathbb{E}\Big[Y_{n,s} \Big|T_0 =T, N_0 = N, (\Gamma_i)_{i \ge 1}\Big]\Big\}. 
$$ 
We write 
$$
\small 
\begin{aligned} 
\mathrm{Var}\left(\sum_{s < T} w_{h, s}^\star Y_{n,s} \Big| (\Gamma_i)_{i \ge 1}, T = T_0 , N = N_0\right) = \mathcal{O}(||w_h^\star||_2^2) = \mathcal{O}(T^{-1/3}). 
\end{aligned} 
$$ 
Therefore, it follows that 
$$
\small 
\begin{aligned} 
\sum_{s < T} w_{h, s}^\star \Big\{Y_{n,s} - \mathbb{E}\Big[Y_{n,s} | (\Gamma_i)_{i \ge 1}, T_0 = T, N_0 = N\Big]\Big\} = \mathcal{O}_p(T^{-1/6}). 
\end{aligned} 
$$ 

\paragraph{Asymptotic normality for fixed $n$} 
Consider now, for given $n$,
\begin{equation} \label{eqn:11} 
\small 
\begin{aligned} 
& \sum_{t \ge T} v_{v,t}^\star \Big\{ Y_{n,t} - \mathbb{E}[Y_{n,t} | (\Gamma_i)_{i \ge 1}, T_0 = T, N_0 = N] - \sum_{i < N} w_{v,i}^\star (Y_{i, t} - \mathbb{E}[Y_{i,s} | (\Gamma_i)_{i \ge 1}, N_0 = N, T_0 = T])\Big\} \\ &= \Big(\Gamma_n - \sum_{i < N} w_{v,i}^\star \Gamma_i\Big) \sum_{t \ge T} v_{v,t}^\star \tilde{\Lambda}_t + \sum_{t \ge T} v_{v,t}^\star \varepsilon_{n,t}. 
\end{aligned} 
\end{equation}
We can write 
\begin{equation} \label{eqn:variance} 
\small 
\begin{aligned} 
\mathrm{Var}\left(\Big(\Gamma_N - \sum_{i < N} w_{v,i}^\star \Gamma_i\Big) \sum_{t \ge T} v_{v,t}^\star \tilde{\Lambda}_t + \sum_{t \ge T} v_{v,t}^\star \varepsilon_{n,t} \Big| T_0 = T, N_0 = N, (\Gamma_i)_{i \ge 1}\right) \ge ||v_v^\star||_2^2 \sigma_\varepsilon^2 \ge \sigma_\varepsilon^2/T_1,
\end{aligned} 
\end{equation}   
since $\sigma_{\varepsilon}^2 > 0$ and $||v_v^\star||_2^2 \ge 1/T_1$ (since $1^\top v_v = 1$).

We can follow verbatim the proof of Theorem \ref{thm:horizontal_asym} (paragraph ``CLT for fixed $n$), with $(\Gamma_n - \sum_{i < N} w_{v, i}^\star \Gamma_i)$ in lieu of $\Gamma_n$ in the proof of Theorem \ref{thm:horizontal_asym}, and obtain 
\begin{equation} \label{eqn:lyapunov}
\small 
\begin{aligned}
\frac{\mathbb{E}\Big[\Big\{\Big(\Gamma_n - \sum_{i < N} w_{v,i}^\star \Gamma_i\Big) \sum_{t \ge T} v_{v,t}^\star \tilde{\Lambda}_t + \sum_{t \ge T} v_{v,t}^\star \varepsilon_{n,t}\Big\}^3 \Big| (\Gamma_i)_{i \ge 1}, N_0 = N, T_0 = T\Big]}{\mathrm{Var}\left(\Big(\Gamma_n - \sum_{i < N} w_{v,i}^\star \Gamma_i\Big) \sum_{t \ge T} v_{v,t}^\star \tilde{\Lambda}_t + \sum_{t \ge T} v_{v,t}^\star \varepsilon_{n,t} \Big| T_0 = T, N_0 = N, (\Gamma_i)_{i \ge 1}\right)^{3/2}}   = o(1). 
\end{aligned} 
\end{equation}

\paragraph{Collecting the terms} We can write, as $T_1 \rightarrow \infty$, for any $N_1$, 
\begin{equation} \label{eqn:above} 
\small 
\begin{aligned}  
\hat{\tau}(w_h^\star, w_v^\star, v_h^\star, v_v^\star) - \tau = \sum_{n \ge N} v_{h, n}\Big\{\underbrace{\Big(\Gamma_n - \sum_{i < N} w_{v,i}^\star \Gamma_i\Big) \sum_{t \ge T} v_{v,t}^\star \tilde{\Lambda}_t + \sum_{t \ge T} v_{v,t}^\star \varepsilon_{n,t}}_{(A)} + \mathcal{O}_p(T_0^{-1/6})\Big\}.
\end{aligned} 
\end{equation}  
Therefore, because $T_1 N_1/T_0^{-1/3} = o(1)$, it follows from Equation \eqref{eqn:variance} that $(A)$ has a slower convergence rate than $T_0^{-1/6}$ and the second component in the right-hand side of Equation \eqref{eqn:above} is asymptotically negligible relative to the first one. 

\paragraph{Asymptotic normality and conclusions} Here, $(A)$ in Equation \eqref{eqn:above} is asymptotically normal by Equation \eqref{eqn:lyapunov} and Lyapunov's central limit theorem. 
Because $\varepsilon_{i,t}$ are independent across units and time, of $\Lambda_t$, and because $v_h^\top 1 = 1$, the central limit theorem for Equation \eqref{eqn:above} directly applies for the variance as defined in Theorem \ref{thm:dr}.

\paragraph{Conclusions}
The case under Assumption \ref{ass:unc2a} follows similarly after inverting the role of the loadings and factors.

\subsection{Proof of Theorem \ref{thm:place}} \label{sec:proof_placebo} Under Assumption \ref{ass:placebo}, following verbatim the proof of Theorem \ref{thm:dr} while inverting the loadings and factors, we can write under the null that $\tau = 0$ (letting $\Gamma_i = \gamma_i + \tilde{\Gamma}_i$)
$$
\hat{\tau}^{pl}(j; w_v^\star, w_h^\star, v_h^\star) = \sum_{t \ge T} v_{h,t}^\star \Big\{\Gamma_j \tilde{\Lambda}_t + \varepsilon_{j,t} - \Gamma_j \sum_{s < T} w_{h,s}^\star \tilde{\Lambda}_s\Big\} + \mathcal{O}_p(N_0^{-1/6}). 
$$ 
We can then follow verbatim the proof of Theorem \ref{thm:dr} (for fixed unit $n$), after inverting the factors with the loadings to claim asymptotic normality. Exchangeability of each estimator $\hat{\tau}^{pl}(j; w_v^\star, w_h^\star, v_h^\star)$ directly follows from Assumption \ref{ass:iid}.

\subsection{Proof of Corollary \ref{lem:aa}} \label{proof:lemma_convergence} We can write 
$$
\small 
\begin{aligned} 
\frac{\hat{\tau}^h(\hat{w}_h, \hat{v}_h, \hat{\beta}_h) - \tau}{\bar{\mathbb{V}}_h^{1/2}(N, T, q_h)}  =  \underbrace{\frac{\hat{\tau}^h(w_h^\star, v_h^\star, \beta_h^\star) - \tau}{\bar{\mathbb{V}}_h^{1/2}(N, T, q_h)}}_{(A)}  + \underbrace{\frac{\hat{\tau}^h(\hat{w}_h, \hat{v}_h, \hat{\beta}_h) - \hat{\tau}^h(w_h^\star, v_h^\star, \beta_h^\star)}{\bar{\mathbb{V}}_h^{1/2}(N, T, q_h)}}_{(B)}.
\end{aligned} 
$$ 
By Theorem \ref{thm:horizontal_asym}, we have $(A) \rightarrow_d \mathcal{N}(0,1)$. We have $\bar{V}_h(N, T, q_h) \ge ||q_h||^2 ||v_h^\star||^2 \sigma_\varepsilon^2 \ge T_1 N_1 \sigma_\varepsilon^2$. Therefore $(B) = o_p(1)$ under Equation \eqref{eqn:error}. The result follows from Slutsky theorem.

\end{document}